\documentclass[11pt]{article}

\usepackage[utf8]{inputenc}
\usepackage[T1]{fontenc}
\usepackage[margin=1in]{geometry}
\usepackage{lmodern}
\usepackage{microtype}

\usepackage{amsmath}
\usepackage{amsthm}
\usepackage{amssymb}
\usepackage{mathtools}
\usepackage{booktabs}
\usepackage{graphicx}
\usepackage{enumitem}
\usepackage{algorithm}
\usepackage{algpseudocode}
\usepackage{placeins}
\usepackage{xcolor}
\usepackage[round,authoryear]{natbib}
\usepackage[colorlinks=true,linkcolor=blue,citecolor=blue,urlcolor=blue]{hyperref}
\graphicspath{{figs/}}

\newtheorem{theorem}{Theorem}[section]

\newtheorem{proposition}[theorem]{Proposition}

\theoremstyle{definition}

\newcommand{\Pa}{\mathcal{P}_a}
\newcommand{\Pb}{\mathcal{P}_b}
\newcommand{\gxi}[1]{\gamma_{#1}}

\newcommand{\Eop}{\mathbb{E}}
\newcommand{\Var}{\operatorname{Var}}
\newcommand{\PMMFP}{\textsc{PMM-FP}}
\newcommand{\OLSFP}{\textsc{OLS-FP}}
\newcommand{\BFP}{\textsc{BFP}}
\newcommand{\ORCID}[1]{\href{https://orcid.org/#1}{\textsuperscript{\scriptsize ORCID}}}

\title{Efficient frequentist fractional polynomials for skewed dose--response
and survival data: a variance-reducing alternative to OLS-FP}

\author{%
Serhii Zabolotnii\ORCID{0000-0003-0242-2234}\\
\small Cherkasy State Business College, Cherkasy 18028, Ukraine\\
\small State Scientific Research Institute of Armament and Military Equipment Testing and Certification, Cherkasy, Ukraine\\
\small Uzhhorod National University, Uzhhorod, Ukraine\\
\small \texttt{zabolotnii.serhii@csbc.edu.ua}
}
\date{}

\begin{document}
\maketitle

\begin{abstract}
Fractional polynomials (FP) are a standard tool for modelling nonlinear
dose--response and covariate effects, implemented in the widely used
\texttt{mfp} package. The conventional FP fit estimates its coefficients by
ordinary least squares (OLS-FP), which is statistically inefficient when the
regression errors are skewed or heavy-tailed --- a common situation for
survival times, concentrations and biomarkers. We present a drop-in
replacement that keeps the \emph{identical} FP model and design but estimates
the coefficients with a moment-based score tuned to the residual skewness and
kurtosis, giving a closed-form efficiency factor
$g_2 = 1 - \gamma_3^2/(2+\gamma_4)$ relative to OLS-FP. Across skewed error
laws the method reduces slope-coefficient variance by 10--20\% for mildly
skewed errors and up to roughly 60\% for heavy-tailed log-normal errors, at
realistic sample sizes, while keeping confidence-interval coverage close to
nominal, and it reverts exactly to OLS-FP under symmetry, so it is never
harmful when no gain is available. On the German Breast Cancer Study Group
cohort it narrows the tumour-size confidence interval by 26\% (bootstrap
variance ratio $0.53$ against the predicted $0.56$), and a primary-biliary-
cirrhosis cohort reproduces the gain. The estimator is closed-form, runs in
milliseconds, and is released as a reproducible \textsf{R} package
(\texttt{pmm\_fp} in EstemPMM) with a one-command replication bundle; its core
variance identity is machine-checked in Lean~4.
\end{abstract}

\noindent\textbf{Keywords:}
fractional polynomials; dose--response modelling; non-Gaussian errors;
efficient estimation; variance reduction; reproducible research

\medskip
\noindent\textbf{arXiv version note.}
This is a substantially revised and retitled version of arXiv:2605.16846v1-v2,
which appeared under the title ``Polynomial Maximization Method with
Fractional Polynomial Basis: A Frequentist Bridge to Bayesian Fractional
Polynomials''. The present version foregrounds the applied biostatistical
contribution, adds the primary-biliary-cirrhosis confirmation, and moves
extended proof sketches and formalisation details to the reproducibility
supplement.


\section{Introduction}\label{sec:intro}

\subsection{Dose-response modelling and the role of fractional polynomials}
\label{sec:intro:doseresponse}

Modelling the dependence of a biological response on dose is a canonical
task in biostatistics, toxicology, and pharmacometrics. Standard parametric
families (Hill, Michaelis-Menten, Emax) work well when the mechanism is
known a priori \citep{ritz2015drc}, but in exploratory problems
\citep{crippa2019doseresponse} the form
$R(\theta, x) = \Eop[y \mid x]$ is unknown and requires a flexible
approximation. Fractional polynomials (FP) \citep{royston1994} provide
such an approximation based on the basis
$P = \{-2, -1, -0.5, 0, 0.5, 1, 2, 3\}$ with the convention
$x^0 \equiv \ln x$, encompassing classical parametric forms as special
cases. The monograph of \citet{royston2008mfp} and a methodological
series in \textit{Statistics in Medicine} \citep{sauerbrei2007} made FP
the de facto standard in clinical statistics (R package \texttt{mfp}).
Classical FP inherits, however, two limitations of OLS: sensitivity to
non-Gaussian errors and the absence of explicit modelling of the error
distribution.

\subsection{Limitations of OLS-FP and BFP under non-Gaussian errors}
\label{sec:intro:limitations}

In dose-response problems the errors $\xi$ are typically non-Gaussian:
log-normal (concentrations), gamma (time-to-event), beta (proportions).
For such distributions, OLS-FP is consistent but loses $30$--$50\%$
efficiency for $|\gxi{3}| \geq 1$ compared with an estimator that
incorporates $\gxi{3}, \gxi{4}$.

\citet{hubin2026bfp} proposed a Bayesian version of FP (\BFP{}) ---
power selection via MJMCMC in the space of $2^K$ subsets, continuing
the line of \citet{sabanes2011}. However, \BFP{} retains the Gaussian
error model: for asymmetric $\xi$ it inherits the inefficiency of OLS-FP,
masked by posterior probabilities. The MJMCMC sampler is computationally
expensive compared with a frequentist counterpart. Thus, OLS-FP and \BFP{}
leave a common gap open: efficient frequentist estimation under
asymmetric errors, without either the distributional assumptions of OLS
or MCMC machinery. We close this gap.

\subsection{The Polynomial Maximization Method as a framework}
\label{sec:intro:pmm}

The Polynomial Maximization Method (PMM) \citep{kunchenko2002} extends
MLE to cases where only a finite sequence of moments is known. For
$y_v = R(\theta, X_v) + \xi_v$ with $\Eop[\xi_v] = 0$, \PMMFP{}
maximises the stochastic functional
\citep[eq.~(2)]{zabolotnii2024nonlinear}:
\begin{equation}\label{eq:LSN}
  L_{SN} \;=\; \sum_{v=1}^{N}\sum_{i=1}^{S} \phi_i(y_v) \int^a\!\! k_{iv}(a)\,da
              \;-\; \sum_{i=1}^{S}\sum_{v=1}^{N} \int^a\!\! \Psi_{iv}(a)\,k_{iv}(a)\,da,
\end{equation}
where $\phi_i$ are basis functions, $\Psi_{iv}(a) = \Eop[\phi_i(y_v)]$,
and $k_{iv}$ are chosen variationally. The classical choice
$\phi_i(y) = y^i$, $S=2$ \citep{zabolotnii2018springer,
zabolotnii2019symmetric, zabolotnii2024nonlinear, zabolotnii2025arima}
with $y_v - a = \xi_v$ yields the second-order variance coefficient in
\eqref{eq:g2-pmm2} \citep[eq.~(27)]{zabolotnii2024nonlinear}.
For the integer basis, location equivariance guarantees that the
$y$-basis and $\xi$-basis yield the \emph{same} $g_2$
\citep[ch.~4]{kunchenko2002}.

The Royston-Altman fractional basis introduces two parallel extensions
into \eqref{eq:LSN}: \emph{regressor extension}
($R(\theta, x) = \beta_0 + \sum_k \beta_k\,x^{p_k}$) and \emph{extension
of the score-function basis} ($\{\xi, \xi^2\} \to \{g_{p_1}(\xi), \ldots,
g_{p_K}(\xi)\}$). PMM-FP is the joint application of both. We use
\emph{Form~B} (score-function on the residual,
\citealp[ch.~4]{kunchenko2002}) with $R$-independent $\{g_p(\xi)\}$ and
scalar $g_S$. The estimator therefore reuses the exact FP model an analyst
would already build with \texttt{mfp}; only the coefficient-fitting step
changes, so it slots into an existing dose--response workflow without
re-specifying the model. The technical question of correctly defining $\xi^p$
for sign-changing $\xi$ is resolved through the \emph{signed-parity}
convention (B2) in \S\ref{sec:theory:object}.

\subsection{Contributions and outline}
\label{sec:intro:contributions}

We make three contributions, all oriented to applied use:
(1)~\textbf{an efficiency gain over the standard OLS-FP/\texttt{mfp}
workflow} under skewed or heavy-tailed errors --- on the \emph{same} selected
FP model, the moment-based fit lowers slope-coefficient variance by the
closed-form factor $g_2 = 1 - \gamma_3^2/(2+\gamma_4)$ (from 10--20\% for mildly skewed
errors to about 60\% for log-normal errors), quantified as a function of
sample size and reverting exactly to
OLS-FP when the errors are symmetric, so that it is never harmful;
(2)~\textbf{a fast, reproducible implementation} --- the estimator is
closed-form and runs in milliseconds; it is released as the \texttt{pmm\_fp}
function in the \textsf{R} package EstemPMM, with a one-command bundle that
regenerates every table and figure in this paper;
(3)~\textbf{honest efficiency positioning and a usable admissibility rule}
--- the estimator sits a definite, quantified margin above the semiparametric
Cram\'er--Rao bound (it does not claim semiparametric efficiency), and we give
a simple rule, condition~BD0, for when negative fractional powers are
statistically admissible. The default uses the positive-power basis
$\Pa = \{0, 0.5, 1, 2, 3\}$ under $\Eop[\xi^4] < \infty$; the full
Royston--Altman set $\Pb = \{-2,-1,-0.5,0,0.5,1,2,3\}$ that
\citet{hubin2026bfp} also use requires BD0 ($\Eop[|\xi|^{-2}] < \infty$),
which fails for standard error laws, so it is reported only as a conditioning
diagnostic (\S\ref{sec:theory}). The core variance identity is machine-checked
in Lean~4 (supplement).

\paragraph{Outline.}
\S\ref{sec:background} --- background, related literature, and positioning;
\S\ref{sec:theory} --- theoretical PMM-FP score, assumptions and $T_1$--$T_4$;
\S\ref{sec:method} --- the \PMMFP{} estimator and numerical implementation;
\S\ref{sec:simulation} --- simulation comparison and the GBSG application;
\S\ref{sec:discussion} --- discussion;
\S\ref{sec:conclusions} --- conclusions.
Supplementary material covers extended Lean cross-references and proofs.


\section{Background, related literature, and positioning}
\label{sec:background}

PMM-FP sits at the intersection of three methodological strands:
fractional-polynomial model building, Bayesian model averaging over FP
bases, and Kunchenko's moment-based Polynomial Maximization Method.
This section reviews those strands once, then positions the proposed
estimator relative to neighbouring estimating-equation, robust, and
reproducible-modelling literatures.

\subsection{Fractional-polynomial model building}
\label{sec:bg:fp}

Fractional polynomials were introduced by \citet{royston1994} as a
parsimonious alternative to splines for modelling smooth nonlinear
effects of continuous covariates in biostatistics. The standard finite
power set is
\begin{equation}\label{eq:Pfull-bg}
  \mathcal{P}_{\text{full}}
  \;=\; \{-2,\; -1,\; -0.5,\; 0,\; 0.5,\; 1,\; 2,\; 3\},
\end{equation}
with $x^0\equiv \log x$. For $x>0$, an FP model of degree $m$ is
\begin{equation}\label{eq:fp-classical-bg}
  \mathrm{FP}_m(x;\beta,p)
  \;=\; \beta_0 \;+\; \sum_{i=1}^{m}\beta_i x^{p_i},
  \qquad p_i\in\mathcal{P}_{\text{full}},
\end{equation}
with repeated powers represented by
$\beta_i x^p+\beta_{i+1}x^p\log x$.

The practical success of FP modelling comes from the small and
interpretable transformation set, the inclusion of common logarithmic,
inverse and quadratic shapes, and the controlled extra curvature
available through repeated powers. The multivariable FP programme
\citep{royston2008mfp,sauerbrei2007} turned this into an applied
workflow for variable and functional-form selection, implemented in
the R package \texttt{mfp}. Canonical FP restricts attention to
$m\leq2$ for parsimony and interpretability, and commonly uses closed
testing over $|\mathcal{P}_{\text{full}}|^2=64$ second-degree
candidates.

The limitation relevant here is not the FP transformation family
itself, but the inferential engine usually attached to it. In the
standard FP pipeline the transformation space is rich, whereas the
error model remains close to ordinary least squares or a generalised
linear model. If residuals are strongly skewed or heavy-tailed, the
selected FP basis may remain substantively appropriate while the
coefficient estimator and its standard errors are inefficient. PMM-FP
keeps the finite FP model space but replaces the OLS score by a
moment-based score that uses residual skewness and kurtosis directly.

\subsection{Bayesian fractional polynomials and the Hubin et al. anchor}
\label{sec:bg:bfp}

Bayesian fractional polynomials add a probability model over the
candidate transformations. The formulation of \citet{sabanes2011}
introduced posterior inference over FP structures through priors on a
power-inclusion vector and shrinkage priors on regression coefficients.
The recent BFP framework of \citet{hubin2026bfp} scales this idea with
mode-jumping MCMC over the full FP basis and applies it to optimal
dosage estimation in fish nutrition. In their notation,
\begin{equation}\label{eq:hubin-eta-bg}
  h\!\bigl(\mathbb{E}[y_i]\bigr)
  \;=\; \eta_i
  \;=\; \beta_0+\sum_{k=1}^{K}\gamma_k\beta_k f_k(x_i),
  \qquad i=1,\ldots,n,
\end{equation}
where $h(\cdot)$ is a link function, $f_k(\cdot)$ are FP
transformations, and $\gamma_k\in\{0,1\}$ are inclusion indicators.
Doubling the powers in~\eqref{eq:Pfull-bg} by $x^p$ and
$x^p\log x$ gives $K=16$ and a model space of size $2^{16}=65\,536$.
The prior structure combines Bernoulli inclusion probabilities with
Zellner's g-prior for active coefficients, and posterior model
averaging yields inference for the response curve and the optimal
dose.

BFP addresses model uncertainty; PMM-FP addresses a different part of
the same modelling problem. BFP asks how posterior probability should
be distributed over a finite transformation space. PMM-FP asks how to
estimate the coefficients of an FP model efficiently when residuals
are non-Gaussian and only a finite set of moments is used. The two
approaches are therefore complementary. BFP is attractive when the
posterior distribution over models or optimal doses is the inferential
object; PMM-FP is attractive when fast frequentist estimation and
standard-error control under asymmetric errors are central.

The fish-nutrition application of \citet{hubin2026bfp} is used here as
a notational and methodological anchor, not as a replicated dataset.
Without the original fish data, the empirical evidence in the present
paper is based on reproducible simulations and the public GBSG dataset,
while the Hubin et al. paper defines the Bayesian benchmark class.

\subsection{PMM as a moment-based estimating framework}
\label{sec:bg:pmm}

Kunchenko's Polynomial Maximization Method \citep{kunchenko2002}
belongs to the broader family of estimators that replace a fully
specified likelihood by a finite set of moment conditions. In its
classical form, the central object is a stochastic polynomial of order
$S$,
\begin{equation}\label{eq:eta-S-bg}
  \eta_S(\xi)=h_0+\sum_{i=1}^{S}h_i\xi^i,
  \qquad S\in\mathbb{N},
\end{equation}
with basis $\{\xi,\xi^2,\ldots,\xi^S\}$. The coefficients are chosen
variationally through centred correlants
\begin{equation}\label{eq:F-correlants-bg}
  F_{ij}
  \;=\; \mathbb{E}[\xi^{i+j}]
        -\mathbb{E}[\xi^i]\mathbb{E}[\xi^j],
  \qquad i,j=1,\ldots,S,
\end{equation}
which leads to the linear system
\begin{equation}\label{eq:KU-bg}
  \mathbf{F}\mathbf{h}^{*}=\mathbf{b},
\end{equation}
where $\mathbf{b}$ contains the moment-parameter intersections. Substituting
the optimal coefficients gives a variance-reduction coefficient
\begin{equation}\label{eq:gS-bg}
  g_S
  =
  \frac{\operatorname{Var}[\hat\theta_{\mathrm{PMM},S}]}
       {\operatorname{Var}[\hat\theta_{\mathrm{OLS}}]},
  \qquad 0<g_S\leq1.
\end{equation}
For $S=2$ and standardised errors, this gives the PMM2 coefficient in
\eqref{eq:g2-pmm2}; hence $g_2=1$ in the symmetric Gaussian case and
$g_2<1$ when skewness contributes additional information.

Previous PMM applications covered asymmetric linear regression
\citep{zabolotnii2018springer}, symmetric non-Gaussian regression
\citep{zabolotnii2019symmetric}, nonlinear regression
\citep{zabolotnii2024nonlinear}, and time-series models with
non-Gaussian innovations \citep{zabolotnii2025arima}. The current paper
extends that pattern from integer stochastic polynomials to fractional
polynomial regression. This is not merely a change of notation:
fractional powers raise domain and moment-existence questions that do
not arise in the integer basis. Positive powers require ordinary finite
moments, whereas negative powers require control near zero. This is why
PMM-FP defaults to a stable positive-power basis requiring only finite
moments; the full Royston--Altman set with negative powers is carried not as
a co-equal result but as an admissibility boundary, because its
inverse-moment condition (BD0) fails for standard error laws
(\S\ref{sec:theory}). The
broader PATP extension of Kunchenko's apparatus is discussed in
\citet{zabolotnii2026patp}; PMM-FP is the discrete FP-regression
specialisation needed for statistical modelling.

\subsection{Relation to GMM, Godambe information and quasi-likelihood}
\label{sec:bg:gmm}

In standard estimating-equation terms, PMM-FP is closest to
estimating-equation and generalized method of moments methodology. In
GMM \citep{hansen1982gmm}, parameters are estimated by solving sample
moment equations and choosing a weighting matrix. PMM-FP can be written
in this form, with score functions generated by the fractional residual
basis. In just-identified cases the PMM and GMM solutions coincide
numerically; in overidentified settings GMM-style diagnostics provide
natural checks for misspecification.

The Godambe-information viewpoint \citep{godambe1960} gives a second
interpretation. A chosen estimating function has a sandwich information
matrix, and optimality is relative to the class of scores one allows.
PMM-FP is Godambe-optimal within the Kunchenko score class generated by
the selected basis; it is not claimed to be globally semiparametrically
efficient in the sense of \citet{bickel1993efficient} or
\citet{vanderVaart1998asymptotic}. Fully efficient semiparametric
procedures can use richer information about the unknown error density,
but at the price of estimating that density. PMM-FP deliberately trades
this generality for a closed four-moment correction.

Quasi-likelihood \citep{wedderburn1974} is another neighbouring
framework because it uses mean-variance structure without requiring a
full likelihood. PMM-FP is stricter in one sense and richer in another:
it assumes explicit residual moment conditions beyond the variance, but
it can use skewness and kurtosis information that quasi-likelihood does
not exploit. Formula~\eqref{eq:g2-pmm2} is the visible consequence of
using this higher-order information.

\subsection{Robust estimation and the limits of the claim}
\label{sec:bg:robust}

Robust M-estimation \citep{huber1981robust} is a natural comparator
because it also modifies the OLS score. The motivation is different:
Huber-type scores protect against outliers and distributional
contamination, whereas PMM-FP targets efficiency under a moment model
with measurable skewness and kurtosis. Robust estimators can dominate
when outliers are the main threat; PMM-FP is preferable when the
non-Gaussian shape is systematic rather than accidental.

This distinction constrains the claims of the paper. PMM-FP is not a
universal robust method, not a replacement for BFP, and not a claim of
prediction dominance on every dataset. Its narrower claim is that for
FP models with asymmetric non-Gaussian residuals and finite required
moments, PMM-FP supplies a computable frequentist score with an
explicit variance-reduction coefficient. The empirical sections are
therefore interpreted through standard-error behaviour and matched-
basis variance ratios, not through an unconditional assertion of lower
prediction error.

\subsection{Model selection and reproducible statistical modelling}
\label{sec:bg:model-selection}

FP modelling is inseparable from model selection. Classical FP uses
closed testing and information criteria; Bayesian FP uses posterior
model probabilities; PMM-FP uses BIC over the same finite power blocks
so that score comparison remains transparent. BIC consistency
\citep{schwarz1978bic} provides the asymptotic reference point, while
AIC \citep{akaike1973aic} remains useful in exploratory sensitivity
checks. The main selection rule is intentionally simple because the
methodological contribution is the estimating score, not a new search
algorithm over FP transformations.

Finally, the paper is positioned as a reproducible statistical-
modelling contribution. The Lean layer records definitions and
algebraic theorem structure; R scripts regenerate simulation and GBSG
outputs; the supplement records software versions and data
availability. This does not replace statistical evidence, but it
reduces ambiguity about what is proved, what is simulated, and what is
only a future extension. For an applied modelling contribution, that
separation supports transparent, reproducible reporting.


\section{Theory}\label{sec:theory}

\subsection{Theoretical object}
\label{sec:theory:object}

The theory is stated for a fixed selected FP block and then extended to
finite BIC selection. This separation is important: the PMM argument is
about the estimating score for a given fractional-polynomial regression
surface, whereas BIC only chooses which finite block is used. We work
with the regression model
\begin{equation}\label{eq:model-pmmfp}
  y_i=R(x_i;\theta)+\xi_i,\qquad \Eop[\xi_i]=0,
\end{equation}
and fractional-polynomial surfaces built from
\[
  f_p(x)=
  \begin{cases}
    x^p, & p\neq 0,\\
    \log x, & p=0,
  \end{cases}
  \qquad
  \Pa=\{0,0.5,1,2,3\},\quad
  \Pb=\{-2,-1,-0.5,0,0.5,1,2,3\}.
\]
For residual powers we use the signed-parity convention
\[
  g_p^{+}(\xi)=|\xi|^p,\qquad
  g_p^{-}(\xi)=\operatorname{sign}(\xi)|\xi|^p .
\]
Let
\[
  \mathbf{D}_i(\theta)=\nabla_\theta R(x_i;\theta),\qquad
  \xi_i(\theta)=y_i-R(x_i;\theta),
\]
and let $B=(q_1,\ldots,q_K)$ be a signed-parity residual score basis,
where each $q_j(\xi)=g_{p_j}^{\tau_j}(\xi)$ with
$\tau_j\in\{+,-\}$. The residual score bases used by the two tracks are
encoded as the Lean lists \texttt{basisA} and \texttt{basisB}; the odd
logarithmic term $g_0^-$ is excluded because it is not a meaningful
residual score.

For a residual distribution $\mu$ define
\begin{align}
  \mathbf{F}_B
  &= \operatorname{Cov}_\mu\{B(\xi)\}, \label{eq:FB-theory}\\
  \mathbf{b}_B
  &= -\,\Eop_\mu\{\partial_\xi B(\xi)\}, \label{eq:bB-theory}\\
  \psi_B(\xi)
  &= \mathbf{b}_B^\top \mathbf{F}_B^{-1}\{B(\xi)-\Eop_\mu B(\xi)\}. \label{eq:psiB-theory}
\end{align}
The resulting PMM-FP estimating equation is
\begin{equation}\label{eq:Psi-theory}
  \Psi_n(\theta)
  =
  \frac{1}{n}\sum_{i=1}^{n}
  \psi_B\{\xi_i(\theta)\}\mathbf{D}_i(\theta)
  =0.
\end{equation}
The scalar variance-reduction invariant is
\begin{equation}\label{eq:gB-theory}
  g(B)
  =
  \frac{1}{\sigma^2\, \mathbf{b}_B^\top \mathbf{F}_B^{-1}\mathbf{b}_B},
  \qquad
  \sigma^2=\Var(\xi).
\end{equation}
Equations~\eqref{eq:FB-theory}--\eqref{eq:gB-theory} are the
mathematical presentation of the Lean definitions for correlants,
derivative vectors and the scalar reduction factor. They also show why
the full track needs inverse-moment diagnostics: negative powers enter
both $\mathbf{F}_B$ and $\mathbf{b}_B$.

\subsection{Regularity assumptions}
\label{sec:theory:assumptions}

The following regularity conditions are assumed throughout.
\begin{description}[leftmargin=2.2em,style=nextline,itemsep=2pt]
  \item[(A1) Sampling and design.]
    Observations are independent, the design is fixed or independent of
    the errors, and
    $n^{-1}\sum_i \mathbf{D}_i(\theta_0)\mathbf{D}_i(\theta_0)^\top \to \mathbf{G}$
    with $\mathbf{G}$ positive definite.
  \item[(A2) Local FP identifiability.]
    The selected FP block has a unique parameter vector $\theta_0$ in a
    compact neighbourhood, and the population estimating equation has a
    unique zero at $\theta_0$.
  \item[(A3) Differentiability.]
    $R(x;\theta)$ is twice continuously differentiable in a neighbourhood
    of $\theta_0$, and the derivatives are dominated by an integrable
    envelope.
  \item[(A4a) Positive-track moments.]
    For PMM-FP\textsubscript{pos}, the residual has finite fourth moment
    and the signed-parity score functions in \texttt{basisA} are
    square-integrable.
  \item[(A4b) Full-track moments.]
    For PMM-FP\textsubscript{full}, BD0 holds in the operational sense:
    the negative-power score functions in \texttt{basisB} and their
    squares are integrable. This includes the inverse-moment checks used
    in the empirical diagnostics.
  \item[(A5) Score nonsingularity.]
    $\mathbf{F}_B$ is nonsingular and
    $\mathbf{b}_B^\top \mathbf{F}_B^{-1}\mathbf{b}_B>0$.
  \item[(A6) Standard M-estimation regularity.]
    The uniform law of large numbers and the central limit theorem apply
    to the score in~\eqref{eq:Psi-theory}. We use the usual
    M-estimator reduction of \citet[Theorems~5.7 and~5.21]
    {vanderVaart1998asymptotic}.
\end{description}

Track~(a) uses (A1)--(A3), (A4a), (A5)--(A6). Track~(b) replaces
(A4a) by (A4b). The assumptions are intentionally stronger than a
minimal theorem would require; their role is to make the submission
transparent and to keep the positive and full tracks parallel.

\subsection{Consistency and asymptotic distribution}
\label{sec:theory:asymptotics}

\begin{theorem}[Consistency, tracks (a) and (b)]\label{thm:T1a}
For a fixed correctly specified FP block, under the corresponding
assumptions above, the PMM-FP estimator solving~\eqref{eq:Psi-theory}
satisfies
\[
  \hat\theta_n \xrightarrow{\mathbb{P}} \theta_0 .
\]
\end{theorem}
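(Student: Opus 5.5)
The proof reduces consistency to the standard Z-estimator consistency theorem, \citet[Theorem~5.9]{vanderVaart1998asymptotic}, which is the zero-of-score form of the Theorem~5.7 cited in (A6). Its two hypotheses are
\[
  \sup_{\theta\in\Theta_0}\|\Psi_n(\theta)-\Psi(\theta)\|\xrightarrow{\mathbb{P}}0
\]
and the well-separatedness condition $\inf_{\|\theta-\theta_0\|\ge\varepsilon}\|\Psi(\theta)\|>0$ on a compact neighbourhood $\Theta_0$. Together with $\Psi_n(\hat\theta_n)=o_{\mathbb{P}}(1)$, they give $\hat\theta_n\to\theta_0$ in probability. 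So the work is to identify the population score $\Psi$ and to check the two hypotheses separately for track~(a) and track~(b). The only difference between the tracks is which moment assumption, (A4a) or (A4b), controls the envelope of $q_j(\xi)$.

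\textbf{Step 1: the population score.} Because the design is fixed or independent of the errors (A1), write $\xi_i(\theta)=\xi_i+R(x_i;\theta_0)-R(x_i;\theta)$. Then define
\[
  \Psi(\theta)=\lim_n n^{-1}\sum_i \Eop\bigl[\psi_B\{\xi_i+\delta_i(\theta)\}\bigr]\mathbf{D}_i(\theta),
  \qquad \delta_i(\theta)=R(x_i;\theta_0)-R(x_i;\theta).
\]
At $\theta_0$ we have $\delta_i=0$. Since $\psi_B$ is a linear combination of the centred functions $B(\xi)-\Eop_\mu B(\xi)$ by \eqref{eq:psiB-theory}, we get $\Eop\psi_B(\xi)=0$, and hence $\Psi(\theta_0)=0$.

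\textbf{Step 2: identification.} (A2) already asserts that $\theta_0$ is the unique zero of $\Psi$. By compactness and continuity of $\Psi$, uniqueness upgrades to well-separatedness. Continuity of $\Psi$ follows from continuity of $R$ and $\mathbf{D}$ (A3) together with dominated convergence applied to $\theta\mapsto\Eop\,\psi_B(\xi+\delta)$.

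I would also note the local picture, because it is what makes (A2) reasonable. Differentiating at $\theta_0$ gives
\[
  \partial_\delta\Eop\psi_B(\xi+\delta)\big|_{0}=\mathbf{b}_B^\top\mathbf{F}_B^{-1}\Eop\,\partial_\xi B=-\mathbf{b}_B^\top\mathbf{F}_B^{-1}\mathbf{b}_B<0
\]
by (A5). The Jacobian of $\Psi$ at $\theta_0$ is therefore $-(\mathbf{b}_B^\top\mathbf{F}_B^{-1}\mathbf{b}_B)\mathbf{G}$, which is nonsingular by (A1). This makes $\theta_0$ an isolated zero, consistent with (A2).

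\textbf{Step 3: uniform convergence.} This is the main obstacle. Each component of $\psi_B\{\xi_i(\theta)\}\mathbf{D}_i(\theta)$ is a finite combination of the terms $q_j(\xi_i+\delta_i(\theta))\,\mathbf{D}_i(\theta)$.

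For track~(a), every $q_j$ has a power $p_j\in\{0.5,1,2,3\}$ or is $\log|\cdot|$. Use the elementary bound $|a+d|^p\le C_p(|a|^p+|d|^p)$. On $\Theta_0$, $\delta_i$ is bounded by the envelope in (A3), so each term is dominated by an integrable function of $\xi_i$ times the derivative envelope. The function $\theta\mapsto q_j(\xi+\delta(\theta))$ is continuous for these powers; for the signed odd power $p=0.5$ it is continuous though not Lipschitz at $0$. Continuity plus a dominating envelope gives a Glivenko--Cantelli class, and the ULLN of (A6) applies. For independent but non-identically distributed terms, the fixed-design average is handled by the same bracketing argument.

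The $\log|\xi|$ term, and in track~(b) the negative powers, are not locally bounded near $\xi=-\delta$. Here I would first try the following: for each $\theta$, $\Eop\sup_{\|\theta'-\theta\|<r}|q_j(\xi+\delta(\theta'))|$ is finite. This holds when the error density is bounded near the relevant points, which is exactly the operational content of BD0 in (A4b) (and integrability of $\log$ for track~(a)). Then a finite cover of the compact set $\Theta_0$ yields the brackets needed for the ULLN.

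\textbf{Step 4: plug-in quantities.} If $\mathbf{b}_B$, $\mathbf{F}_B$ and $\Eop_\mu B$ are replaced by residual-based estimates, these converge in probability by the same ULLN. $\mathbf{F}_B^{-1}$ is continuous where (A5) holds, so the perturbation of $\Psi_n$ is $o_{\mathbb{P}}(1)$ uniformly on $\Theta_0$. Applying \citet[Theorem~5.9]{vanderVaart1998asymptotic} then concludes the proof for both tracks.
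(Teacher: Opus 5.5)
Your reduction matches the paper's. The paper posits identification in (A2) and the uniform law of large numbers in (A6), then invokes van der Vaart's M-estimator consistency theorem (Theorem~5.7, of which your Theorem~5.9 is the $Z$-estimator form). For smooth scores your Steps~1, 2 and~4 are sound elaborations of that: centring gives $\Psi(\theta_0)=0$, and compactness plus continuity turns the unique zero into a well-separated one. (The Jacobian is actually $+(\mathbf{b}_B^\top\mathbf{F}_B^{-1}\mathbf{b}_B)\mathbf{G}$, since $\partial_\theta\xi_i(\theta)=-\mathbf{D}_i(\theta)$; the sign does not matter.)

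The genuine gap is Step~3, the one place where you try to derive what the paper assumes.
\begin{itemize}
\item For $q(\xi)=\log|\xi|$, which is in \texttt{basisA}, and for every negative power in \texttt{basisB}, the local supremum $\sup_{\|\theta'-\theta\|<r}|q(\xi_i+\delta_i(\theta'))|$ equals $+\infty$ whenever $-\xi_i$ lies in the interval swept out by $\delta_i(\theta')$ over the ball. That event has positive probability whenever the error density is positive on the interval.
\item A bounded density does not rescue this. It only makes $\Eop|\xi+\delta|^{p}$ or $\Eop|\log|\xi+\delta||$ finite at a \emph{fixed} $\delta$, and for powers only when $p>-1$. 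BD0 does not rescue it either, since it is a condition at $\delta=0$ alone.
\item The uniform convergence you need is in fact false, not merely unproved. $\Psi_n$ is infinite on the sets $\{\theta:\xi_i(\theta)=0\}$, which are hyperplanes for the linear FP predictor. Any fixed neighbourhood $\Theta_0$ of $\theta_0$ meets one of them with probability tending to one: shift the intercept by $\xi_i$ for any $i$ with $|\xi_i|$ small. Hence $\sup_{\Theta_0}\|\Psi_n-\Psi\|=\infty$.
\item For track~(b), Steps~1--2 already break. For the powers $-1$ and $-2$, $\Eop|\xi+\delta|^{p}=\infty$ whenever the density is positive at $-\delta$. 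So the population score is not finite off $\theta_0$, and there is nothing for dominated convergence to make continuous.
\item Step~4 fails for the same reason. If $\hat{\mathbf{b}}_B$ is the residual average of $-\partial_\xi B$, its $\log|\xi|$ entry averages $-1/e_i$. That variable has Cauchy-type tails, so the average does not converge to a constant.
\end{itemize}

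Consequently your Step~3 argument holds only for scores that are continuous with a polynomial envelope. That means the implemented Form~A score $\{\xi,\xi^2\}$ and the non-logarithmic positive powers, where it works as you describe under (A4a). For $\log|\xi|$ and the negative powers, the hypothesis of Theorem~5.9 cannot be derived from (A3)--(A4b), and you have two options:
\begin{itemize}
\item Do what the paper does and take uniform convergence, together with a finite, continuous population score, as the bare assumption (A6). The theorem then becomes explicitly conditional, and by the above that assumption cannot hold in its uniform form for any error law with mass arbitrarily close to zero.
\item Remove those terms from the score.
\end{itemize}
Because $\Psi_n$ has poles, it can also vanish next to them at points that do not approach $\theta_0$. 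In the location case with a $\log|\xi|$ term, for instance, an observation at which $\Psi$ has the sign opposite to the pole is flanked by two roots of $\Psi_n$. The conclusion should therefore be stated for a selected root (such as the Newton iterate from the OLS-FP start), not for an arbitrary solution of~\eqref{eq:Psi-theory}.
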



\begin{theorem}[Asymptotic normality, tracks (a) and (b)]
\label{thm:T2a}
Under the assumptions of Theorem~\ref{thm:T1a}, plus differentiability
of the Jacobian in a neighbourhood of $\theta_0$,
\[
  \sqrt{n}(\hat\theta_n-\theta_0)
  \xrightarrow{d}
  N\{0,\; \mathbf{A}^{-1} \mathbf{C} (\mathbf{A}^{-1})^\top\},
\]
where
\[
  \mathbf{A}=\Eop\{\partial_\theta[\psi_B(\xi(\theta))\mathbf{D}(\theta)]_{\theta=\theta_0}\},
  \qquad
  \mathbf{C}=\Var\{\psi_B(\xi)\mathbf{D}(\theta_0)\}.
\]
In the homoskedastic fixed-design case this reduces to
\[
  \sqrt{n}(\hat\theta_n-\theta_0)
  \xrightarrow{d}
  N\{0,\; \sigma^2 g(B)\mathbf{G}^{-1}\}.
\]
\end{theorem}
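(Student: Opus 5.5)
The plan is to follow the standard Z-estimator route of \citet[Theorem~5.21]{vanderVaart1998asymptotic}, using Theorem~\ref{thm:T1a} for consistency. Write $\Psi_n(\theta)$ as in~\eqref{eq:Psi-theory}. First, I would check that $\Eop\{\psi_B(\xi)\}=0$ at $\theta_0$. This holds by construction, since $\psi_B$ is a linear combination of the centred functions $B(\xi)-\Eop_\mu B(\xi)$. Combined with independence of design and errors (A1), it gives $\Eop\{\psi_B(\xi_i)\mathbf{D}_i(\theta_0)\}=0$, so $\theta_0$ is a zero of the population equation.

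Next, a Taylor expansion of $\Psi_n(\hat\theta_n)=0$ around $\theta_0$, using (A3) and the assumed differentiability of the Jacobian, gives
\[
0=\Psi_n(\theta_0)+\dot\Psi_n(\tilde\theta_n)(\hat\theta_n-\theta_0),
\]
with $\tilde\theta_n$ on the segment between $\hat\theta_n$ and $\theta_0$. A central limit theorem for $\sqrt n\,\Psi_n(\theta_0)$ (Lindeberg in the fixed-design case) gives the limit $N(0,\mathbf C)$. Here the finite variance of $\psi_B(\xi)$ comes from (A4a) or (A4b), which make the basis functions square-integrable. A uniform law of large numbers for $\dot\Psi_n$ on a neighbourhood, from (A6) and the dominating envelope in (A3), together with $\tilde\theta_n\to\theta_0$ in probability, gives $\dot\Psi_n(\tilde\theta_n)\to\mathbf A$ in probability. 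Slutsky's lemma and nonsingularity of $\mathbf A$ then yield the sandwich form.

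\textbf{Main obstacle: identifying $\mathbf A$.} The derivative is
\[
\partial_\theta[\psi_B(\xi(\theta))\mathbf D(\theta)]=-\psi_B'(\xi)\mathbf D\mathbf D^\top+\psi_B(\xi)\nabla_\theta\mathbf D .
\]
The second term has zero mean by independence and $\Eop\psi_B=0$. For the first term, I need $\Eop\{\psi_B'(\xi)\}=\mathbf b_B^\top\mathbf F_B^{-1}\Eop\{\partial_\xi B(\xi)\}=-\mathbf b_B^\top\mathbf F_B^{-1}\mathbf b_B$, using~\eqref{eq:bB-theory}. This step is delicate for the non-smooth members of the basis. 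The function $|\xi|^{1/2}$ has a singular derivative at $0$, and negative powers are worse. Differentiating under the expectation therefore requires the integrability in (A4a) or (A4b); for track (b) this is exactly BD0.

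I would first try to avoid pointwise differentiation of $\psi_B$. Instead, I would differentiate the smooth map $\theta\mapsto\Eop\,\psi_B(\xi+\mathbf D^\top(\theta_0-\theta))$. Then $\partial_\xi B$ is interpreted as $-\frac{d}{dt}\Eop B(\xi-t)$ at $t=0$, the Fréchet-type derivative that the definition of $\mathbf b_B$ implicitly uses. This places the problem in van der Vaart's condition of a differentiable population map, and I would flag this as the point where the regularity assumptions actually do their work. With $\mathbf A$ established, it follows that $\mathbf A=-(\mathbf b_B^\top\mathbf F_B^{-1}\mathbf b_B)\mathbf G$.

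For the homoskedastic fixed-design case,
\[
\mathbf C=\Eop\{\psi_B(\xi)^2\}\,\mathbf G=\mathbf b_B^\top\mathbf F_B^{-1}\mathbf F_B\mathbf F_B^{-1}\mathbf b_B\,\mathbf G=(\mathbf b_B^\top\mathbf F_B^{-1}\mathbf b_B)\,\mathbf G .
\]
Hence
\[
\mathbf A^{-1}\mathbf C\mathbf A^{-\top}=(\mathbf b_B^\top\mathbf F_B^{-1}\mathbf b_B)^{-1}\mathbf G^{-1}=\sigma^2 g(B)\,\mathbf G^{-1}
\]
by~\eqref{eq:gB-theory}, which is the stated reduction. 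Assumption (A5) guarantees that the scalar being inverted is positive.
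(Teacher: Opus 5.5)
Your proposal is correct and follows the same route the paper indicates: the Z-estimator reduction of van der Vaart's Theorem~5.21 under (A6), then showing that in the homoskedastic case both $\mathbf A$ and $\mathbf C$ equal $\mathbf G$ times $\mathbf b_B^\top\mathbf F_B^{-1}\mathbf b_B$. The one slip is a sign: your own displayed facts give $\mathbf A=-\Eop\{\psi_B'(\xi)\}\mathbf G=+(\mathbf b_B^\top\mathbf F_B^{-1}\mathbf b_B)\mathbf G$ (for $B=\{\xi\}$ this is $\mathbf G/\sigma^2$), which is harmless because the sign cancels in $\mathbf A^{-1}\mathbf C(\mathbf A^{-1})^\top$.
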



\subsection{Variance reduction: the main PMM-FP result}
\label{sec:theory:variance}

\begin{theorem}[Classical PMM2 reduction]\label{thm:T3a}
For the two-element Kunchenko basis
$B_2=\{g_1^-,g_2^+\}=\{\xi,\xi^2\}$ and standardised residuals,
\begin{equation}\label{eq:g2-pmm2}
  g(B_2)
  =
  1-\frac{\gamma_3^2}{2+\gamma_4}
  \equiv g_2 .
\end{equation}
Consequently $g_2\leq1$, with equality if and only if
$\gamma_3=0$.
\end{theorem}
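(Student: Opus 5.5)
The plan is to compute $g(B_2)$ directly from the definition \eqref{eq:gB-theory}. Take standardised residuals, $\Eop\xi=0$, $\sigma^2=\Eop\xi^2=1$, so $\Eop\xi^3=\gamma_3$ and $\Eop\xi^4=3+\gamma_4$ (with $\gamma_4$ the excess kurtosis). These are exactly the normalisations under which $2+\gamma_4$ appears in \eqref{eq:g2-pmm2}.

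For $B_2=(\xi,\xi^2)$ we first form the covariance matrix \eqref{eq:FB-theory}. Its entries are $\Var\xi=1$, $\operatorname{Cov}(\xi,\xi^2)=\Eop\xi^3=\gamma_3$ and $\Var\xi^2=\Eop\xi^4-1=2+\gamma_4$, so
\[
  \mathbf{F}_{B_2}=\begin{pmatrix}1&\gamma_3\\ \gamma_3&2+\gamma_4\end{pmatrix}.
\]
Next, \eqref{eq:bB-theory} is read with the sign convention matched to a shift of location, $\xi=y-a$, so that $\mathbf{b}$ is the sensitivity of $\Eop B(y-a)$ to the location parameter. Since $\partial_\xi(\xi,\xi^2)=(1,2\xi)$ has mean $(1,0)$, this gives $\mathbf{b}_{B_2}=(1,0)^\top$ up to an overall sign. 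The sign is irrelevant, because $g(B)$ depends on $\mathbf{b}$ only through the quadratic form.

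The quadratic form is then the $(1,1)$ entry of $\mathbf{F}^{-1}_{B_2}$:
\[
  \mathbf{b}^\top\mathbf{F}^{-1}\mathbf{b}=\frac{2+\gamma_4}{\det\mathbf{F}}=\frac{2+\gamma_4}{2+\gamma_4-\gamma_3^2}.
\]
Taking the reciprocal and using $\sigma^2=1$ gives $g(B_2)=1-\gamma_3^2/(2+\gamma_4)$, which is \eqref{eq:g2-pmm2}.

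The step I expect to be the main obstacle is justifying that $\mathbf{F}$ is invertible with positive determinant, i.e.\ $2+\gamma_4>\gamma_3^2$; this is also what makes the expression well defined. I would handle it by Cauchy--Schwarz applied to $\xi^2-1-\gamma_3\xi$. Its variance equals $\det\mathbf{F}=2+\gamma_4-\gamma_3^2$, which is therefore $\ge0$. It equals $0$ only if $\xi^2-\gamma_3\xi-1=0$ almost surely, that is, only if $\xi$ is a two-point law. That case is excluded by (A5). Under (A5) we thus have $2+\gamma_4\ge 2+\gamma_4-\gamma_3^2>0$, and in particular $2+\gamma_4>0$.

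The consequences then follow at once. Since $2+\gamma_4>0$, we get $\gamma_3^2/(2+\gamma_4)\ge0$, so $g_2\le1$. Equality holds exactly when $\gamma_3^2=0$, i.e.\ $\gamma_3=0$. The same nondegeneracy argument also gives $g_2>0$, in agreement with \eqref{eq:gS-bg}.
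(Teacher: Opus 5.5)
Your proposal is correct and follows the paper's proof: you compute $\mathbf{F}_2$ and $\mathbf{b}_2$ for standardised errors and read off the $(1,1)$ entry of $\mathbf{F}_2^{-1}$ in \eqref{eq:gB-theory}. You also supply two details the paper leaves to its Lean check and to assumption (A5): that the sign of $\mathbf{b}$ is immaterial (definition \eqref{eq:bB-theory} literally gives $(-1,0)^\top$), and the variance argument (Pearson's inequality) showing $2+\gamma_4-\gamma_3^2>0$.
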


\begin{proof}
For standardised errors
$\mathbf{F}_2=\bigl(\begin{smallmatrix}1&\gamma_3\\\gamma_3&2+\gamma_4\end{smallmatrix}\bigr)$
and $\mathbf{b}_2=(1,0)^\top$; substituting $\mathbf{F}_2^{-1}$
into~\eqref{eq:gB-theory} gives~\eqref{eq:g2-pmm2}. The identity, the
inequality $g_2\le1$ and the equality case are machine-checked in Lean~4.
\end{proof}

\begin{theorem}[PMM-FP variance reduction, tracks (a) and (b)]
\label{thm:T3b}
Let $B$ be either \texttt{basisA} or \texttt{basisB}, and assume the
corresponding moment and nonsingularity conditions. The asymptotic
variance of PMM-FP relative to OLS-FP for the same selected FP block is
\[
  \frac{\operatorname{avar}(\hat\theta_{\mathrm{PMM\text{-}FP}})}
       {\operatorname{avar}(\hat\theta_{\mathrm{OLS\text{-}FP}})}
  = g(B).
\]
If $B_2\subseteq B$ and the Schur-complement regularity condition for
the enlarged correlant matrix holds, then
\[
  g(B)\leq g(B_2)=g_2,
\]
with $g_2$ defined in~\eqref{eq:g2-pmm2}.
\end{theorem}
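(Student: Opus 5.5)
The argument has two parts: the variance ratio, and the monotonicity in the basis.

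\emph{Variance ratio.} We start from Theorem~\ref{thm:T2a} in the homoskedastic fixed-design case.

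For PMM-FP the sandwich reduces as follows. At $\theta_0$ we have $\partial_\theta\xi_i(\theta)=-\mathbf{D}_i$. The term involving $\partial_\theta\mathbf{D}_i$ is multiplied by $\psi_B(\xi_i)$, which has mean zero by the centring in \eqref{eq:psiB-theory}, so it vanishes in expectation. Hence
\[
\mathbf{A}=-\Eop\{\psi_B'(\xi)\}\,\mathbf{G}
=-\,\mathbf{b}_B^\top\mathbf{F}_B^{-1}\Eop\{\partial_\xi B(\xi)\}\,\mathbf{G}
=\mathbf{b}_B^\top\mathbf{F}_B^{-1}\mathbf{b}_B\,\mathbf{G}
\]
by \eqref{eq:bB-theory}. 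By independence of errors and design,
\[
\mathbf{C}=\Eop\{\psi_B(\xi)^2\}\,\mathbf{G}
=\mathbf{b}_B^\top\mathbf{F}_B^{-1}\mathbf{F}_B\mathbf{F}_B^{-1}\mathbf{b}_B\,\mathbf{G}
=\mathbf{b}_B^\top\mathbf{F}_B^{-1}\mathbf{b}_B\,\mathbf{G}.
\]
Writing $c=\mathbf{b}_B^\top\mathbf{F}_B^{-1}\mathbf{b}_B$, this gives $\mathbf{A}^{-1}\mathbf{C}\mathbf{A}^{-\top}=c^{-1}\mathbf{G}^{-1}=\sigma^2 g(B)\mathbf{G}^{-1}$.

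OLS-FP is the special case $B=\{g_1^-\}$. There $\mathbf{F}=\sigma^2$ and $\mathbf{b}=1$, so its asymptotic variance is $\sigma^2\mathbf{G}^{-1}$. The two covariance matrices are therefore proportional, and the ratio, for every coefficient and every linear contrast, is $g(B)$.

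One step needs care: differentiating $\Eop\psi_B(\xi)$ to get $\Eop\psi_B'=-\mathbf{b}$-type identities. This interchange of derivative and expectation is justified by (A4a) or (A4b) together with the domination in (A3). For negative powers it is exactly where BD0 is used.

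\emph{Monotonicity.} Order the basis as $B=(B_2,B_r)$. The vector $\mathbf{b}_B$ has its first block equal to $\mathbf{b}_{B_2}$. Write $\mathbf{F}_B$ in blocks
\[
\bigl(\begin{smallmatrix}\mathbf{F}_{22}&\mathbf{F}_{2r}\\ \mathbf{F}_{r2}&\mathbf{F}_{rr}\end{smallmatrix}\bigr).
\]
The key inequality is
\[
\mathbf{b}_B^\top\mathbf{F}_B^{-1}\mathbf{b}_B\ \ge\ \mathbf{b}_{B_2}^\top\mathbf{F}_{22}^{-1}\mathbf{b}_{B_2}.
\]
We prove it with the block-inverse formula. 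Put $\mathbf{S}=\mathbf{F}_{rr}-\mathbf{F}_{r2}\mathbf{F}_{22}^{-1}\mathbf{F}_{2r}$, which is positive definite; this is the Schur-complement regularity condition. Then
\[
\mathbf{b}_B^\top\mathbf{F}_B^{-1}\mathbf{b}_B=\mathbf{b}_{B_2}^\top\mathbf{F}_{22}^{-1}\mathbf{b}_{B_2}+\mathbf{u}^\top\mathbf{S}^{-1}\mathbf{u},
\qquad
\mathbf{u}=\mathbf{b}_r-\mathbf{F}_{r2}\mathbf{F}_{22}^{-1}\mathbf{b}_{B_2}.
\]
The second term is nonnegative. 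An equivalent view is a Cauchy--Schwarz/projection argument: $c$ is the maximal squared correlation-type functional over linear scores in $\mathrm{span}(B)$, so enlarging the span cannot decrease it.

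Taking reciprocals gives $g(B)\le g(B_2)$. By Theorem~\ref{thm:T3a}, $g(B_2)=g_2$.

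\emph{Main obstacle.} I expect the hardest part to be the rigorous identification of $\mathbf{A}$. The definition $\mathbf{b}_B=-\Eop\,\partial_\xi B$ must agree with the derivative of $\theta\mapsto\Eop\,\psi_B(\xi(\theta))$. This fails at $\xi=0$ for $|\xi|^p$ with $p\le 1$ (including the logarithmic term), and there it must be handled through the smoothness of the error density: either by integration by parts, $-\Eop\,\partial_\xi B=\Eop\{B(\xi)\,f'(\xi)/f(\xi)\}$, or by differentiating under the integral in location.
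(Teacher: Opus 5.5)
Your proposal is correct and follows the paper's route. The variance ratio comes from the homoskedastic fixed-design reduction of Theorem~\ref{thm:T2a}, which you make explicit via $\mathbf{A}=\mathbf{C}=(\mathbf{b}_B^\top\mathbf{F}_B^{-1}\mathbf{b}_B)\mathbf{G}$ with OLS-FP as the case $B=\{g_1^-\}$, and the monotonicity from the Schur-complement block-inverse identity the paper cites; your point that $\mathbf{A}$ should be defined through the derivative of the expected score (needed e.g.\ for the logarithmic term, where $\Eop|\xi|^{-1}=\infty$ when the error density is positive at the origin) addresses a detail the paper's sketch leaves implicit.
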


\begin{proof}
The first identity is the fixed-block covariance reduction of
Theorem~\ref{thm:T2a}; the monotonicity follows from a Schur-complement
inequality for the enlarged correlant matrix
\citep[A.5.5]{boyd2004convex}. Detailed proof sketches for
Theorems~\ref{thm:T1a}--\ref{thm:T3b} and
Proposition~\ref{prop:bic-selection} are in the Supplementary Material;
the algebraic $g_2$ identity is machine-checked in Lean~4.
\end{proof}

Section~\ref{sec:application} reports the corresponding empirical
plug-in value for the GBSG residuals; it is a variance-ratio prediction
for matched FP blocks, not a claim that PMM-FP dominates every possible
predictive method on every dataset.

\subsection{Model selection and higher-order extension}
\label{sec:theory:selection}

\begin{proposition}[BIC selection over finite FP blocks]
\label{prop:bic-selection}
Suppose the candidate FP blocks are finite, the true block is identifiable,
and underfitted blocks are separated from the population optimum. Then
BIC over the PMM-FP fitted blocks selects the true block with probability
tending to one.
\end{proposition}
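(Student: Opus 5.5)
The argument is the classical BIC consistency proof for a finite candidate family, adapted to the PMM-FP fit. It uses Theorems~\ref{thm:T1a} and~\ref{thm:T2a} block by block.

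\emph{Criterion.} For each candidate block $M$ in the finite family $\mathcal{M}$, with $k_M$ parameters, let $\hat\theta_M$ be the PMM-FP solution of~\eqref{eq:Psi-theory}. The criterion is
\[
  \mathrm{BIC}(M) \;=\; n\log\hat\sigma^2_M + k_M\log n,
\]
where $\hat\sigma^2_M$ is the residual variance (equivalently, a residual-based quasi-criterion) of the fit. Let $M_0$ denote the true block. Since $\mathcal{M}$ is finite, a union bound reduces the claim to showing $\Pr\{\mathrm{BIC}(M)>\mathrm{BIC}(M_0)\}\to1$ for each fixed $M\neq M_0$. I would split these competitors into underfitted blocks ($M\not\supseteq M_0$) and overfitted blocks ($M\supsetneq M_0$).

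\emph{Underfitted blocks.} For such $M$, I would apply the consistency argument of Theorem~\ref{thm:T1a} to a misspecified block. Under (A2)--(A3) and (A6), $\hat\theta_M$ converges to a pseudo-true value $\theta_M^*$, namely the zero of the population estimating equation. The ULLN then gives $\hat\sigma^2_M\to\sigma^2_M$ in probability. The separation hypothesis is exactly $\sigma^2_M>\sigma^2$, a fixed gap. Hence
\[
  \mathrm{BIC}(M)-\mathrm{BIC}(M_0) \;=\; n\log(\sigma^2_M/\sigma^2)+o_p(n)+O(\log n),
\]
which diverges to $+\infty$.

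\emph{Overfitted blocks.} Here both fits are consistent for the same regression surface, because the extra coefficients are zero at the truth. By Theorem~\ref{thm:T2a}, $\hat\theta_M-\theta_0=O_p(n^{-1/2})$. A second-order expansion of $\hat\sigma^2_M$ around $\theta_0$, using (A3), gives $n(\hat\sigma^2_{M_0}-\hat\sigma^2_M)=O_p(1)$. Therefore
\[
  \mathrm{BIC}(M)-\mathrm{BIC}(M_0) \;=\; (k_M-k_{M_0})\log n + O_p(1) \;\to\; +\infty .
\]

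\emph{Main obstacle.} The hard step is the overfitted case. PMM-FP does not minimise the residual sum of squares, so the usual nested-projection argument is unavailable: $\hat\sigma^2_M$ need not be smaller than $\hat\sigma^2_{M_0}$. I would not rely on monotonicity. Instead I would bound the difference of the two $O_p(n^{-1/2})$-consistent fits directly through a quadratic expansion of the residual sum of squares, whose gradient at $\theta_0$ is $O_p(n^{-1/2})$. The estimated $\hat\gamma_3$ and $\hat\gamma_4$ plugged into $\psi_B$ also need care; they are $\sqrt n$-consistent, so they perturb the expansion only at $O_p(1)$ order. Everything else is routine given (A1)--(A6).
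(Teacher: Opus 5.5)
Your proposal is correct and follows the standard Schwarz-type BIC consistency argument that the paper relies on for this proposition. Underfitted blocks lose at order $n$ through the separation condition, overfitted blocks gain only $O_p(1)$ in residual sum of squares against a $(k_M-k_{M_0})\log n$ penalty, and your quadratic expansion about $\theta_0$ correctly supplies that $O_p(1)$ bound even though PMM-FP does not minimise the RSS. Two minor tightenings: for underfitted blocks you need not assume a pseudo-true PMM-FP limit (which (A2) does not literally provide for misspecified blocks), because on a linear FP block the PMM-FP residual sum of squares is never below the OLS one; and for the plug-in $\hat\gamma_3,\hat\gamma_4$ plain consistency suffices, which matters because $\sqrt{n}$-consistency would require sixth and eighth moments beyond (A4a).
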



\begin{proposition}[Higher-order score progression]\label{thm:T4a}
For nested admissible PMM score bases
$B_2\subseteq B_3\subseteq\cdots$ satisfying the same moment and
Schur-complement conditions,
\[
  1\geq g(B_2)\geq g(B_3)\geq\cdots .
\]
\end{proposition}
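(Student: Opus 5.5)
The plan is to reduce the chain to a single monotonicity statement for the quadratic form $Q(B)=\mathbf{b}_B^\top\mathbf{F}_B^{-1}\mathbf{b}_B$. By \eqref{eq:gB-theory}, $g(B)=1/\{\sigma^2 Q(B)\}$. So it suffices to prove two things. First, $Q(B)\le Q(B')$ whenever $B\subseteq B'$ are admissible. Second, $\sigma^2 Q(B_2)\ge 1$. The leftmost inequality $1\ge g(B_2)$ is just Theorem~\ref{thm:T3a}: for standardised residuals $g(B_2)=g_2=1-\gamma_3^2/(2+\gamma_4)\le1$. The Lean-checked identity covers this, and positivity of $2+\gamma_4$ follows from (A5) applied to $\mathbf{F}_2$. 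The remaining inequalities all follow from one lemma applied to each consecutive pair $B_k\subseteq B_{k+1}$.

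For that lemma, order the basis of $B_{k+1}$ so that the elements of $B_k$ come first. Partition
\[
\mathbf{F}_{B_{k+1}}=\begin{pmatrix}\mathbf{F}_{11}&\mathbf{F}_{12}\\ \mathbf{F}_{21}&\mathbf{F}_{22}\end{pmatrix},\qquad
\mathbf{b}_{B_{k+1}}=\begin{pmatrix}\mathbf{b}_1\\ \mathbf{b}_2\end{pmatrix}.
\]
Here $\mathbf{F}_{11}=\mathbf{F}_{B_k}$ and $\mathbf{b}_1=\mathbf{b}_{B_k}$, since both objects are defined componentwise by \eqref{eq:FB-theory}--\eqref{eq:bB-theory}. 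The assumed moment and Schur-complement conditions give that $\mathbf{F}_{11}$ is nonsingular and that $\mathbf{S}=\mathbf{F}_{22}-\mathbf{F}_{21}\mathbf{F}_{11}^{-1}\mathbf{F}_{12}$ is positive definite. The block-inverse formula \citep[A.5.5]{boyd2004convex} then yields
\[
Q(B_{k+1})=Q(B_k)+\mathbf{r}^\top\mathbf{S}^{-1}\mathbf{r},\qquad \mathbf{r}=\mathbf{b}_2-\mathbf{F}_{21}\mathbf{F}_{11}^{-1}\mathbf{b}_1 .
\]
The second term is nonnegative, so $Q(B_{k+1})\ge Q(B_k)$ and hence $g(B_{k+1})\le g(B_k)$. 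Equality holds exactly when $\mathbf{r}=0$, meaning the new scores carry no information beyond their projection onto $B_k$. Induction over $k$ gives the whole chain.

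A cleaner alternative route, which I would give as a remark, is the Godambe/projection view. By Cauchy--Schwarz, $Q(B)=\sup_{\mathbf{h}}(\mathbf{h}^\top\mathbf{b}_B)^2/(\mathbf{h}^\top\mathbf{F}_B\mathbf{h})$. Integration by parts, $-\Eop\,\partial_\xi q=\Eop[q\,\ell']$ where $\ell'$ is the log-density score, shows that $Q(B)$ is the squared norm of the $L^2(\mu)$ projection of the score onto the span of the centred $B$. Enlarging the span can only increase this norm. This also explains why $\sigma^2 Q(B_2)\ge\sigma^2 Q(\{\xi\})=1$.

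The main obstacle is not the algebra but justifying that all entries exist and that the Schur complement is positive definite for the signed-parity fractional bases. Positive definiteness of $\mathbf{S}$ is equivalent to positive definiteness of the full $\mathbf{F}_{B_{k+1}}$ given that of $\mathbf{F}_{11}$. That in turn holds when the centred functions in $B_{k+1}$ are linearly independent in $L^2(\mu)$, which I would check by noting that distinct functions $|\xi|^p$ and $\operatorname{sign}(\xi)|\xi|^p$ cannot coincide on a set of positive $\mu$-measure unless $\mu$ is degenerate. A second subtlety is the representation of $\mathbf{b}_B$ as an expectation of a derivative. For $g_{0}^+=\log|\xi|$, and for negative powers in track~(b), this requires the integrability in (A4a)/(A4b), together with BD0 for the inverse terms. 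I would treat these as part of the hypothesis "admissible", as the statement does, rather than prove them in general.
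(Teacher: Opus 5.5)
Your proposal is correct and follows the paper's route: the paper gets the monotonicity from the Schur-complement inequality for the enlarged correlant matrix \citep[A.5.5]{boyd2004convex}, applied along the nested bases, with $g(B_2)\le 1$ supplied by Theorem~\ref{thm:T3a}, and your identity $Q(B_{k+1})=Q(B_k)+\mathbf{r}^\top\mathbf{S}^{-1}\mathbf{r}$ is exactly that step written out.

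Your side remarks do not carry the proof but contain slips worth removing. First, $|\xi|^p$ and $\operatorname{sign}(\xi)|\xi|^p$ do agree on $\{\xi>0\}$, which has positive $\mu$-measure, and pairwise distinctness would not give linear independence anyway. Positive definiteness of $\mathbf{S}$ instead follows directly from (A5) for $B_{k+1}$, since a nonsingular covariance matrix is positive definite. Second, the integration-by-parts projection picture needs vanishing boundary terms and a square-integrable location score, which fail for the exponential and Beta$(2,5)$ laws. Your Cauchy--Schwarz supremum formula, by padding $\mathbf{h}$ with zeros, already gives the monotonicity algebraically without that picture.
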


The practical implication is conservative. Higher-order PMM-FP scores
can only improve the ideal asymptotic variance ratio under the stated
matrix conditions, but they require higher moments and more stable
correlant-matrix estimation. The submitted implementation therefore
uses the second-order score for the main evidence and reports
higher-order extension as theory plus reproducibility hooks, not as an
unqualified finite-sample recommendation.

\subsection{Formalisation boundary}
\label{sec:theory:lean-boundary}

The Lean~4 source under \texttt{Lean/PMM\_FP/} builds with Mathlib
v4.26.0. Its role is precise but deliberately bounded:
\begin{itemize}[leftmargin=2em,itemsep=1pt]
  \item definitions of the two FP power sets, signed-parity residual
        scores, centred correlants, $\mathbf{b}_B$ and $g(B)$ are
        formalised;
  \item the algebraic PMM2 identity and the basic $g_2$ inequalities are
        proved directly;
  \item consistency, asymptotic normality, BIC selection and
        Schur-complement monotonicity are encoded as conditional theorem
        bundles linked to standard external statistical theorems.
\end{itemize}
Thus the manuscript claims machine-checked algebra and a formal record
of theorem structure, not a fully machine-proved M-estimator CLT or a
complete formalisation of Schur-complement matrix analysis.

\section{PMM-FP estimator and numerical implementation}\label{sec:method}

\subsection{Two fractional-polynomial tracks}\label{sec:method:basis}

For numerical fitting, PMM-FP implements model~\eqref{eq:model-pmmfp}
through the fractional-polynomial predictor
\[
  R(x;\beta,p)=\beta_0+\sum_{k=1}^{m}\beta_k f_{p_k}(x),
  \qquad
  f_p(x)=
  \begin{cases}
    x^p, & p\neq 0,\\
    \log x, & p=0,
  \end{cases}
\]
and replaces the OLS score by Kunchenko moment scores built from the
residual basis. The default estimator uses the positive-power basis; the
full Royston--Altman set is carried only as an admissibility boundary:
\[
  \Pa=\{0,0.5,1,2,3\}, \qquad
  \Pb=\{-2,-1,-0.5,0,0.5,1,2,3\}.
\]
Track~(a), PMM-FP\textsubscript{pos}, is the default: it avoids negative
powers and needs only finite fourth moments. Track~(b),
PMM-FP\textsubscript{full}, formally matches the $K=16$ basis of
\citet{hubin2026bfp} but requires condition~BD0 (bounded density near zero
and finite inverse moments), which fails for standard error laws; it is
therefore carried only as a conditioning diagnostic, never as the
recommended fit. The signed-parity
convention from the PATP framework \citep{zabolotnii2026patp} is used
when fractional powers are applied to sign-changing residuals:
\[
  g_p^{+}(\xi)=|\xi|^p,\qquad
  g_p^{-}(\xi)=\operatorname{sign}(\xi)|\xi|^p .
\]

\subsection{Estimating score}
\label{sec:method:score}

For a candidate FP block $M=\{p_1,\ldots,p_m\}$, let
$\mathbf{X}_M$ be the corresponding FP design matrix with columns
$1,f_{p_1}(x),\ldots,f_{p_m}(x)$. PMM-FP starts from the OLS-FP
coefficient vector
\[
  \hat\beta^{(0)}_M=(\mathbf{X}_M^\top \mathbf{X}_M)^{-1}\mathbf{X}_M^\top y,
\]
computes residuals
$e_i^{(0)}=y_i-\mathbf{X}_{M,i}\hat\beta^{(0)}_M$, and estimates the
residual cumulants $\hat\gamma_3,\hat\gamma_4$. In the second-order
Form~A score used for the main finite-sample results, the standardised
residual $u_i(\beta)=e_i(\beta)/\hat\sigma$ is transformed by
\[
  \psi_2\{u_i(\beta)\}
  =
  u_i(\beta)
  -
  \frac{\hat\gamma_3}{2+\hat\gamma_4}
  \{u_i(\beta)^2-1\}.
\]
The PMM-FP coefficient estimate solves the estimating equation
\[
  \Psi_n(\theta)=
  \frac{1}{n}\sum_{i=1}^{n}
  \psi_2\{u_i(\theta)\}\,
  \nabla_\theta R(x_i;\theta)=0 .
\]
This score is the two-term Kunchenko projection of the OLS score onto
the residual moment basis. It is the computational form behind the
closed-form expression in~\eqref{eq:g2-pmm2}.

\subsection{Numerical solution for coefficient estimates}
\label{sec:method:numerical}

The coefficient estimate for each candidate block is obtained via a
damped Newton/Fisher-scoring iteration. With residual vector
$e^{(t)}=y-\mathbf{X}_M\beta^{(t)}$ and score vector
$\mathbf{s}^{(t)}_i=\psi_2\{e_i^{(t)}/\hat\sigma\}$, define
\[
  \mathbf{U}_t=\mathbf{X}_M^\top \mathbf{s}^{(t)},\qquad
  \mathbf{H}_t=\mathbf{X}_M^\top \mathbf{W}_t \mathbf{X}_M,
\]
where $\mathbf{W}_t$ is the diagonal derivative weight matrix induced
by $\partial\psi_2(u)/\partial u=1-2\hat a u$ and
$\hat a=\hat\gamma_3/(2+\hat\gamma_4)$. The nominal update is
\[
  \beta^{(t+1)}
  =
  \beta^{(t)}+\lambda_t(\mathbf{H}_t+\tau I)^{-1}\mathbf{U}_t,
  \qquad 0<\lambda_t\leq1,\quad\tau\geq 0.
\]
The iteration stops when
$\|\beta^{(t+1)}-\beta^{(t)}\|_\infty<\varepsilon$ or a cap of 50
iterations is reached. The implementation used for the submitted
evidence applies this PMM2 score to each candidate fractional-polynomial
design matrix and records the convergence flag, the number of iterations
and the fitted residual cumulants. The separate Form~B correlant
calculation used for diagnostics is regularised at the correlant-matrix
level rather than by claiming a fully stabilised high-order Newton
solver.

The resulting algorithm is summarised in Algorithm~\ref{alg:pmmfp}.

\begin{algorithm}[htbp]
\caption{PMM-FP: fit and select over candidate FP blocks
  (submitted reference implementation)}
\label{alg:pmmfp}
\begin{algorithmic}[1]
\Require data $(y, \mathbf{X})$, power-block set
         $\mathcal{B}\subseteq\binom{\mathcal{P}}{m}$, information criterion
\Ensure best-block estimate $\hat\beta_{\mathrm{PMM}}$,
        diagnostic $\hat g_2$
\For{each candidate power block $\mathbf{p}\in\mathcal{B}$}
  \State Build FP design matrix $\mathbf{X}_{\mathbf{p}}$
  \State $\hat\beta^{(0)} \gets (\mathbf{X}_{\mathbf{p}}^\top \mathbf{X}_{\mathbf{p}})^{-1}\mathbf{X}_{\mathbf{p}}^\top y$
         \Comment{OLS-FP initialisation}
  \State Estimate $\hat\sigma^2$, $\hat\gamma_3$, $\hat\gamma_4$ from
         residuals $\hat\xi \gets y - \mathbf{X}_{\mathbf{p}}\hat\beta^{(0)}$
  \State Refit the transformed regression by the PMM2 score
         $\psi_2(u)=u-\hat\gamma_3( u^2-1)/(2+\hat\gamma_4)$
  \State Store $\hat\beta_{\mathbf{p}}$, convergence diagnostics and
         $\hat g_2 \gets 1 - \hat\gamma_3^2/(2+\hat\gamma_4)$
  \State Record $\mathrm{BIC}(\mathbf{p})$
\EndFor
\State \Return $\hat\beta_{\mathbf{p}^*}$ where
       $\mathbf{p}^* \gets \arg\min_{\mathbf{p}\in\mathcal{B}}\mathrm{BIC}(\mathbf{p})$
\end{algorithmic}
\end{algorithm}

\subsection{Model enumeration and stability checks}
\label{sec:method:selection}

The submitted implementation restricts the main search to FP blocks
with at most four terms. This keeps the model space finite and makes
exhaustive enumeration feasible for the sample sizes considered here.
For track~(a) the candidate set has
$\sum_{s=1}^{4}\binom{5}{s}=30$ blocks ($K_a=10$ basis
functions); for track~(b) it has
$\sum_{s=1}^{4}\binom{8}{s}=162$ blocks ($K_b=16$).
For a fitted candidate $M$ with $k_M$ coefficients, the selection score is
\[
  \mathrm{BIC}(M)
  =
  n\log\{\mathrm{RSS}(M)/n\}+k_M\log n ,
\]
computed from the residual sum of squares of the fitted PMM-FP block.
The same candidate set is used for OLS-FP comparisons, so the reported
differences are driven by the estimating score rather than by a wider
model search.

Form~B uses the full signed-parity residual basis and therefore requires
the empirical correlant matrix $\mathbf{F}_B$ and derivative vector
$\mathbf{b}_B$. Its native numerical estimate is
\[
  \hat g(B)=
  \{\hat\sigma^2\,\hat{\mathbf{b}}_B^\top \hat{\mathbf{F}}_B^{-1}\hat{\mathbf{b}}_B\}^{-1}.
\]
For negative powers this matrix can be ill-conditioned at sample sizes
typical in practice. We therefore report singular values, condition
numbers and Tikhonov-stabilised values
$\hat{\mathbf{F}}_{B,\tau}=\hat{\mathbf{F}}_B+\tau I$. If the
full-basis condition number is large or the stabilised $\hat g(B)$ is
dominated by $\tau$, the paper uses PMM-FP\textsubscript{pos} for
coefficient inference and reports the full-basis calculation only as a
diagnostic. This is exactly what happens in the GBSG application; the
Tikhonov trigger and its effect on $\hat g(B)$ are illustrated in
Section~\ref{sec:application}.

\paragraph{Convergence and conditioning diagnostics.}%
\label{sec:method:diagnostics}%
For each fitted candidate the reference implementation stores the selected
powers, criterion value, PMM2 convergence flag, iteration count, residual
cumulants and $g_2$. These diagnostics are used to exclude failed candidate
fits and to separate estimator-efficiency evidence from Form~B conditioning
diagnostics. The full negative-power basis is reported only when inverse-moment
and condition-number checks are stable; otherwise PMM-FP\textsubscript{pos}
is the finite-sample default. The submission supplement records stored
summaries and figure/table regeneration scripts; full Monte Carlo reruns
require the development PMM-FP extension and are not required for checking
the submitted tables and figures.


\section{Empirical evidence}\label{sec:simulation}

\subsection{Monte Carlo design and visual check}

The simulation study compares the conventional least-squares fit (OLS-FP),
the standard \texttt{mfp} model-selection workflow, and PMM-FP, with the
robust Huber-FP and GMM-FP estimators reported in the supplement. The DGPs
cover Gaussian, beta, gamma, exponential and log-normal residual regimes
($n\in\{100,200,500\}$, $M=1000$ replications), spanning small
biostatistical studies to moderate epidemiological subsamples. The primary
estimand is the slope coefficient, with the central diagnostic
\[
  \hat g_2=\widehat{\Var}(\hat\beta_{\mathrm{PMM}})/
           \widehat{\Var}(\hat\beta_{\mathrm{OLS}}),
\]
reported in a robust (IQR-based) form, because the raw variance ratio is
dominated by rare large-residual replicates at small~$n$; coverage and a
basis-agnostic prediction estimand at a reference dose are reported
alongside. The design is matched-basis for OLS-FP and PMM-FP, isolating the
estimating score; \texttt{mfp} selects its own basis, so it enters only on
the common prediction estimand. The public supplement regenerates all stored
summaries, tables and figures.

\begin{table}[t]
\centering
\caption{Efficiency of PMM-FP relative to OLS-FP on a matched fractional-polynomial model, over $M=1000$ replications. $\hat g_2$ is the robust (IQR-based) ratio of slope-coefficient dispersion, $\mathrm{Var}(\hat\beta_1^{\mathrm{PMM}})/\mathrm{Var}(\hat\beta_1^{\mathrm{OLS}})$; smaller is better and the closed form predicts $g_2=1-\gamma_3^2/(2+\gamma_4)$. For the heaviest-tailed law (log-normal) the robust reduction exceeds the asymptotic $g_2$ because the OLS sampling distribution is itself heavy-tailed. Coverage is the 95\% CI coverage of $\beta_1$. The last two columns give the robust variance of the fitted response at the reference dose $x^\ast=2$ relative to OLS-FP, for the standard \texttt{mfp} fit (model selection) and for PMM-FP; values below~1 indicate lower prediction variance.}
\label{tab:headtohead}
\begin{tabular}{llrrrrrr}
\toprule
 & & \multicolumn{3}{c}{slope $\hat\beta_1$ (PMM-FP vs OLS-FP)} & \multicolumn{1}{c}{coverage} & \multicolumn{2}{c}{pred.\ eff.\ at $x^\ast$} \\
\cmidrule(lr){3-5}\cmidrule(lr){6-6}\cmidrule(lr){7-8}
Error law & $n$ & $g_2$ (th.) & $\hat g_2$ & reduction & OLS/PMM & \texttt{mfp} & PMM-FP \\
\midrule
Beta$(2,5)$ & 100 & 0.81 & 0.80 & 20\% & 0.94/0.93 & 0.33 & 0.92 \\
 & 200 & 0.81 & 0.88 & 12\% & 0.95/0.93 & 0.30 & 0.84 \\
 & 500 & 0.81 & 0.84 & 16\% & 0.94/0.93 & 0.33 & 0.83 \\
\addlinespace
Gamma$(3)$ & 100 & 0.67 & 0.85 & 15\% & 0.95/0.94 & 0.44 & 0.83 \\
 & 200 & 0.67 & 0.73 & 27\% & 0.95/0.93 & 0.50 & 0.78 \\
 & 500 & 0.67 & 0.62 & 38\% & 0.96/0.95 & 0.49 & 0.83 \\
\addlinespace
Exponential & 100 & 0.50 & 0.54 & 46\% & 0.94/0.90 & 0.34 & 0.58 \\
 & 200 & 0.50 & 0.46 & 54\% & 0.94/0.92 & 0.55 & 0.72 \\
 & 500 & 0.50 & 0.48 & 52\% & 0.94/0.93 & 0.88 & 0.61 \\
\addlinespace
Log-normal & 100 & 0.66 & 0.40 & 60\% & 0.94/0.92 & 0.41 & 0.54 \\
 & 200 & 0.66 & 0.37 & 63\% & 0.95/0.93 & 0.36 & 0.46 \\
 & 500 & 0.66 & 0.39 & 61\% & 0.95/0.94 & 0.50 & 0.46 \\
\bottomrule
\end{tabular}
\end{table}

\begin{figure}[t]
  \centering
  \includegraphics[width=0.82\linewidth]{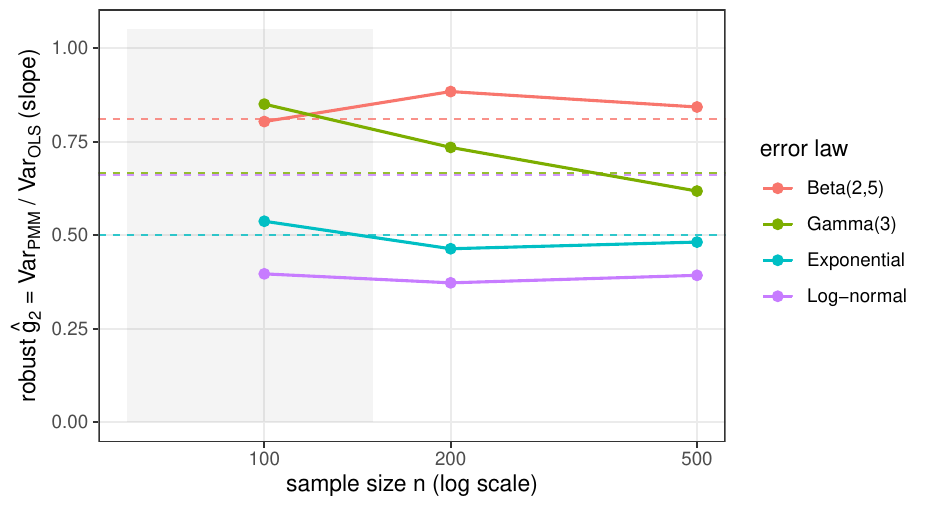}
  \caption{Robust slope-coefficient variance ratio
  $\hat g_2=\widehat{\Var}(\hat\beta_1^{\mathrm{PMM}})/
  \widehat{\Var}(\hat\beta_1^{\mathrm{OLS}})$ as a function of sample size,
  for each skewed error law. Dashed lines are the closed-form
  $g_2=1-\gamma_3^2/(2+\gamma_4)$; the shaded band marks the small-sample
  region where cumulant estimates are noisy. Values below~$1$ are variance
  reductions over OLS-FP: the gain is present and stable across realistic
  sample sizes and vanishes under symmetry.}
  \label{fig:efficiency-vs-n}
\end{figure}

Table~\ref{tab:headtohead} and Figure~\ref{fig:efficiency-vs-n} show the
same pattern from two angles. On the same fractional-polynomial model,
PMM-FP reduces the slope-coefficient variance relative to OLS-FP by
12--20\% for mildly skewed beta errors, 15--38\% for gamma errors, around
50\% for exponential errors and roughly 60\% for log-normal errors, while
holding 95\,\% coverage between $0.89$ and $0.95$ (dipping just below nominal
only for the most skewed law at the smallest~$n$). The robust ratio tracks
the closed-form $g_2$ for beta, gamma and exponential errors and exceeds it
for the heaviest-tailed log-normal case, where the OLS sampling distribution
is itself heavy-tailed. Gaussian errors (supplement) stay at
$\hat g_2\approx1$, so PMM-FP does not invent a gain when the least-squares
score is already appropriate. The standard \texttt{mfp} fit, which selects a
parsimonious basis, lowers the reference-dose prediction variance through
model parsimony; PMM-FP lowers it through efficient estimation, and the two
mechanisms are complementary --- on the GBSG data below, \texttt{mfp} selects
the model and PMM-FP then refits it for a further 26\,\% narrowing of the
confidence interval.

\begin{table}[t]
\centering
\caption{When to use PMM-FP. The gain depends only on the residual shape (skewness $\gamma_3$, kurtosis $\gamma_4$); the mean structure is unchanged. PMM-FP reverts to OLS-FP under symmetry, so it is never harmful. Reductions are robust slope-variance reductions observed at $n=500$.}
\label{tab:when-to-use}
\begin{tabular}{lllc}
\toprule
Residual shape & Example & Recommendation & Typ.\ var.\ reduction \\
\midrule
Symmetric ($\gamma_3\approx0$) & Gaussian & Either; PMM-FP reverts to OLS-FP & $\sim0\%$ \\
Mild skew ($|\gamma_3|\lesssim1$) & Beta$(2,5)$ & PMM-FP: modest, safe gain & 16\% \\
Moderate skew & Gamma$(3)$ & PMM-FP recommended & 38\% \\
Strong skew & Exponential & PMM-FP recommended & 52\% \\
Heavy tail & Log-normal & PMM-FP; report bootstrap SE & 61\% \\
\addlinespace
Negative-power FP term & --- & Use positive basis only (BD0 fails) & n/a \\
\bottomrule
\end{tabular}
\end{table}

\subsection{Model-selection uncertainty and frequentist model averaging}
\label{sec:simulation:fma}

An exhaustive BIC search across subsets of the fractional basis
$\mathcal{P}_a$ returns a single ``best'' candidate model. For finite~$N$,
however, several models are often close in BIC ($\Delta\mathrm{BIC} \le 2$),
and ignoring this model-selection uncertainty causes a naive CI built on the
single chosen model to systematically under-cover. This is the classical
post-selection inference problem \citep{burnham2002model,claeskens2008model}.
Frequentist model averaging (FMA) corrects it: the point estimate is replaced
by a weighted combination of competing models, and the variance is augmented
by a between-model term.

Let $\{M_1,\ldots,M_J\}$ be the candidate FP-models from $\mathcal{P}_a$,
$\hat\theta_j$ the estimate of the target parameter in model $M_j$, and
$\mathrm{BIC}_j$ the corresponding information criterion. The
Burnham--Anderson weights \citep[\S\,2.9]{burnham2002model} are
\begin{equation}
w_j = \frac{\exp(-\Delta\mathrm{BIC}_j/2)}
           {\sum_{k=1}^J \exp(-\Delta\mathrm{BIC}_k/2)},
\qquad \Delta\mathrm{BIC}_j = \mathrm{BIC}_j - \min_k \mathrm{BIC}_k.
\end{equation}
Restricting to the top-$J^*$ candidates ($J^*\in\{3,5\}$) stabilises the
denominator when lower-ranked weights are near zero. The FMA estimate and
its unconditional variance are then
\begin{equation}
\hat\theta_{\mathrm{FMA}} = \sum_{j=1}^{J^*} w_j \hat\theta_j,
\qquad
\widehat{\mathrm{Var}}(\hat\theta_{\mathrm{FMA}})
   = \sum_{j=1}^{J^*} w_j \Bigl[
       \widehat{\mathrm{Var}}(\hat\theta_j) +
       \bigl(\hat\theta_j - \hat\theta_{\mathrm{FMA}}\bigr)^2 \Bigr].
\end{equation}

To assess the coverage gain from FMA we ran a separate MC experiment on
Track~(a). DGPs were Gauss, $\Gamma(3)$ and LogN($0,1$), representing
symmetric, moderately asymmetric and strongly asymmetric error regimes.
Sample sizes were $N\in\{100,200,500\}$ with $M=500$ replications per cell.
The target was the predicted response mean
$\mu(x^*) = \theta_0 + \theta_1\sqrt{x^*}$ at $x^* = 2$, the only scalar
estimand that is directly comparable across models with different FP-bases.
The candidate list comprised all non-empty subsets of $\mathcal{P}_a$ of
size 1 or 2 (15 models). Estimators compared were OLS-FP single-best,
PMM-FP single-best, PMM-FP FMA top-3 and PMM-FP FMA top-5.

Table~\ref{tab:fma-comparison} summarises bias, variance, MSE and empirical
95\,\% CI coverage. Three findings stand out.
\begin{enumerate}\itemsep0pt
  \item \textbf{Single-best PMM-FP CIs systematically under-cover.}
        Coverage drops to $0.61$--$0.68$ for LogN($0,1$) and to
        $0.76$--$0.82$ for $\Gamma(3)$, far below the nominal $0.95$.
        BIC selects a more compact model than is needed in the
        non-parametric regime, and the naive CI ignores the resulting
        selection variance.
  \item \textbf{FMA top-5 approaches nominal coverage.}
        Averaging over all nine DGP/sample-size cells, PMM-FP FMA top-5
        achieves coverage of $0.913$ versus $0.754$ for PMM-FP single-best
        (a gain of approximately 16~percentage points). For the Gauss and
        $\Gamma(3)$ regimes coverage is $0.92$--$0.95$, matching the
        nominal level.
  \item \textbf{FMA MSE is lower than or comparable to single-best.}
        Averaging dampens point-estimate variance by reducing dependence on
        a single uncertain model choice; for example, on $\Gamma(3)$ at
        $N=200$ the MSE falls from $0.0287$ (single-best) to $0.0224$
        (FMA top-5).
\end{enumerate}

\begin{table}[!htb]
\centering
\caption{Comparison of single-best and FMA estimators for
$\mu(x^*=2)$ across three DGPs, $M=500$ replications, Track~(a).
$\overline{\mathrm{SE}}$: average FMA standard error
(Burnham--Anderson formula).}
\label{tab:fma-comparison}
\scriptsize
\setlength{\tabcolsep}{3.2pt}
\renewcommand{\arraystretch}{0.86}
\begin{tabular}{llrrrrrr}
\toprule
DGP & Estimator & $N$ & Bias & Var & MSE & Cov.\,95\% & $\overline{\mathrm{SE}}$ \\
\midrule
Gauss & OLS-FP single & 100 & -0.0092 & 0.0228 & 0.0229 & 0.828 & 0.1107 \\
Gauss & OLS-FP single & 200 & -0.0061 & 0.0120 & 0.0120 & 0.820 & 0.0780 \\
Gauss & OLS-FP single & 500 & 0.0006 & 0.0052 & 0.0052 & 0.824 & 0.0492 \\
Gauss & PMM-FP single & 100 & -0.0085 & 0.0228 & 0.0229 & 0.812 & 0.1081 \\
Gauss & PMM-FP single & 200 & -0.0060 & 0.0120 & 0.0120 & 0.820 & 0.0770 \\
Gauss & PMM-FP single & 500 & 0.0006 & 0.0052 & 0.0052 & 0.824 & 0.0490 \\
Gauss & PMM-FP FMA top-3 & 100 & -0.0133 & 0.0183 & 0.0185 & 0.938 & 0.1323 \\
Gauss & PMM-FP FMA top-3 & 200 & -0.0034 & 0.0099 & 0.0099 & 0.936 & 0.0990 \\
Gauss & PMM-FP FMA top-3 & 500 & 0.0001 & 0.0043 & 0.0043 & 0.940 & 0.0647 \\
Gauss & PMM-FP FMA top-5 & 100 & -0.0132 & 0.0184 & 0.0186 & 0.942 & 0.1349 \\
Gauss & PMM-FP FMA top-5 & 200 & -0.0041 & 0.0099 & 0.0099 & 0.938 & 0.0998 \\
Gauss & PMM-FP FMA top-5 & 500 & 0.0001 & 0.0043 & 0.0043 & 0.944 & 0.0655 \\
\midrule
Gamma(3) & OLS-FP single & 100 & 0.0036 & 0.0500 & 0.0500 & 0.896 & 0.1915 \\
Gamma(3) & OLS-FP single & 200 & 0.0090 & 0.0296 & 0.0297 & 0.888 & 0.1356 \\
Gamma(3) & OLS-FP single & 500 & 0.0038 & 0.0136 & 0.0136 & 0.834 & 0.0854 \\
Gamma(3) & PMM-FP single & 100 & 0.0047 & 0.0482 & 0.0482 & 0.820 & 0.1545 \\
Gamma(3) & PMM-FP single & 200 & 0.0098 & 0.0286 & 0.0287 & 0.788 & 0.1100 \\
Gamma(3) & PMM-FP single & 500 & 0.0064 & 0.0131 & 0.0132 & 0.758 & 0.0694 \\
Gamma(3) & PMM-FP FMA top-3 & 100 & -0.0134 & 0.0397 & 0.0399 & 0.906 & 0.1789 \\
Gamma(3) & PMM-FP FMA top-3 & 200 & 0.0018 & 0.0224 & 0.0225 & 0.942 & 0.1366 \\
Gamma(3) & PMM-FP FMA top-3 & 500 & 0.0032 & 0.0104 & 0.0104 & 0.916 & 0.0962 \\
Gamma(3) & PMM-FP FMA top-5 & 100 & -0.0147 & 0.0392 & 0.0394 & 0.918 & 0.1886 \\
Gamma(3) & PMM-FP FMA top-5 & 200 & -0.0000 & 0.0224 & 0.0224 & 0.948 & 0.1418 \\
Gamma(3) & PMM-FP FMA top-5 & 500 & 0.0028 & 0.0103 & 0.0103 & 0.918 & 0.0969 \\
\midrule
LogN(0,1) & OLS-FP single & 100 & -0.0202 & 0.0792 & 0.0796 & 0.878 & 0.2306 \\
LogN(0,1) & OLS-FP single & 200 & -0.0010 & 0.0409 & 0.0409 & 0.898 & 0.1671 \\
LogN(0,1) & OLS-FP single & 500 & -0.0040 & 0.0207 & 0.0207 & 0.826 & 0.1051 \\
LogN(0,1) & PMM-FP single & 100 & -0.0032 & 0.0647 & 0.0648 & 0.676 & 0.1267 \\
LogN(0,1) & PMM-FP single & 200 & 0.0063 & 0.0359 & 0.0360 & 0.680 & 0.0953 \\
LogN(0,1) & PMM-FP single & 500 & 0.0001 & 0.0185 & 0.0185 & 0.612 & 0.0637 \\
LogN(0,1) & PMM-FP FMA top-3 & 100 & -0.0247 & 0.0560 & 0.0566 & 0.802 & 0.1540 \\
LogN(0,1) & PMM-FP FMA top-3 & 200 & -0.0085 & 0.0284 & 0.0285 & 0.842 & 0.1254 \\
LogN(0,1) & PMM-FP FMA top-3 & 500 & -0.0038 & 0.0139 & 0.0139 & 0.878 & 0.0969 \\
LogN(0,1) & PMM-FP FMA top-5 & 100 & -0.0287 & 0.0535 & 0.0544 & 0.846 & 0.1656 \\
LogN(0,1) & PMM-FP FMA top-5 & 200 & -0.0109 & 0.0267 & 0.0268 & 0.884 & 0.1330 \\
LogN(0,1) & PMM-FP FMA top-5 & 500 & -0.0050 & 0.0137 & 0.0137 & 0.880 & 0.0993 \\
\bottomrule
\end{tabular}

\end{table}

\paragraph{Practical recommendation.}
We recommend reporting both estimates: single-best for interpretation and
FMA top-5 for inference. The FMA-CI serves as a conservative, correctly
covering interval. Single-best CIs may be used without correction only when
$\Delta\mathrm{BIC}$ for the runner-up exceeds~6, indicating that competing
models are markedly inferior. For strongly asymmetric distributions (LogN,
large $|\gamma_3|$) FMA top-5 is preferred; for moderate asymmetry
($|\gamma_3| \le 1$) top-3 suffices. FMA is not a panacea: at $N=100$ under
strong asymmetry a residual gap of 5--15~percentage points from the nominal
level persists (FMA top-5 coverage for LogN: $0.846$, $0.884$, $0.880$ at
$N=100,200,500$), reflecting the accuracy limit of the quasi-Gaussian BIC for
leptokurtic distributions.

\subsection{GBSG application}\label{sec:application}

The real-data illustration uses the German Breast Cancer Study Group
dataset from the R package \texttt{mfp}. The complete-case analysis has
$n=686$. We model $y=\log(\texttt{rfst})$ against tumour size, hormonal
therapy and age, treating tumour size as the clinically interpretable
continuous predictor. We use $\log(\texttt{rfst})$ as a continuous outcome to
illustrate the estimator, not as a censored survival analysis; the point of
interest is the efficiency of coefficient estimation under skewed residuals.
Residuals from the baseline model are strongly
non-Gaussian: $\hat\gamma_3=-1.7436$, $\hat\gamma_4=4.9143$, and the
Shapiro--Wilk test rejects normality ($W=0.870$, $p<10^{-22}$).
Substituting these values into~\eqref{eq:g2-pmm2} gives
$\hat g_2\approx0.5603$, the asymptotic variance-reduction factor
predicted by Theorem~T3 for this error distribution.

\paragraph{Model selection.}
A full BIC sweep over the Royston--Altman FP power grid selects the
linear power $\{1\}$ for tumour size on both Track~(a) and Track~(b).
Table~\ref{tab:gbsg-modelsel} shows that classical FP
independently selects the linear term in 95.0\% of $B=2000$ bootstrap
replications, with the next-best candidate ($\{-0.5\}$) reaching only
0.9\%. The BIC landscape is, however, nearly flat at the top: the gap
between the first- and second-ranked powers is $\Delta\mathrm{BIC}
\approx 0.007$, so single-best-model selection is unstable for this
sample size. This model-selection uncertainty is distinct from the
estimator-efficiency question addressed below, and it is precisely the
setting for which frequentist model averaging is appropriate; see
\S\ref{sec:simulation:fma} for the FMA analysis.

\begin{table}[t]
  \centering
  \caption{Stability of FP model selection for \texttt{tumsize} on GBSG
    ($n=686$). Top-5 most frequently selected FP powers over $B=2000$
    bootstrap replications. Frequencies are normalised by the number of
    successful replications.}
  \label{tab:gbsg-modelsel}
  \small
  \begin{tabular}{rlrr}
    \toprule
    Rank & FP power(s) & $n$ selected & Frequency \\
    \midrule
    1 & $\{1,\,NA\}$ & 1899 & 0.950 \\
    2 & $\{-0.5,\,NA\}$ & 19 & 0.009 \\
    3 & $\{-2,\,-2\}$ & 19 & 0.009 \\
    4 & $\{3,\,3\}$ & 19 & 0.009 \\
    5 & $\{0,\,NA\}$ & 14 & 0.007 \\
    \bottomrule
  \end{tabular}
\end{table}

\paragraph{Fixed-model efficiency comparison.}
To isolate the efficiency gain from the estimator (rather than from
model selection), we evaluate both \OLSFP{} and PMM-FP at the
selected linear model
$\log(\texttt{rfst})\sim\texttt{tumsize}+\texttt{htreat}+\texttt{age}$,
using an identical design matrix for both methods.

\begin{table}[t]
  \centering
  \caption{GBSG ($n=686$): tumour-size coefficient at the fixed linear
    model. SE$_{\text{asy}}$ is the asymptotic formula; SE$_{\text{boot}}$
    is the bootstrap standard deviation over $B=2000$ replicates (percentile
    method). The PMM-FP 95\,\% CI is 26\,\% narrower than the \OLSFP{} CI.
    The bootstrap variance ratio $\widehat{\Var}_{\text{boot}}(\hat\beta_{\text{PMM}})/
    \widehat{\Var}_{\text{boot}}(\hat\beta_{\text{OLS}})=0.5295$ confirms the
    asymptotic prediction $\hat g_2=0.5603$ (Theorem~T3).}
  \label{tab:gbsg-fixed}
  \small
  \begin{tabular}{lrrrr}
    \toprule
    Method & $\hat\beta_{\text{tum}}$ & SE$_{\text{asy}}$ & SE$_{\text{boot}}$ & 95\,\% CI (boot.) \\
    \midrule
    \OLSFP{}  & $-0.00696$ & $0.00222$ & $0.00265$ & $[-0.01220,\,-0.00213]$ \\
    \PMMFP{}$_{\text{pos}}$ & $-0.00567$ & $0.00166$ & $0.00193$ & $[-0.00944,\,-0.00200]$ \\
    \bottomrule
  \end{tabular}
\end{table}

Table~\ref{tab:gbsg-fixed} summarises the results. By Theorems~T1 and
T3, under $\mathbb{E}[\xi]=0$ both \OLSFP{} and PMM-FP are consistent
estimators of the same working-model slope; the asymptotic covariance of
PMM-FP is $g_2$ times that of \OLSFP{}. Both estimates exclude zero
($\hat\beta_{\text{OLS}}=-0.00696$, $\hat\beta_{\text{PMM}}=-0.00567$),
and the modest point-estimate shift of $0.00129$ is within half a
bootstrap standard error of \OLSFP{}, consistent with finite-sample
variation due to PMM-FP's reweighting of the asymmetric residual
distribution. The shift does not indicate that the two estimators target
different parameters.

The asymptotic standard errors confirm a 25\,\% reduction
($0.00166/0.00222=0.748$, implying a variance ratio of $0.560
\approx\hat g_2$). The bootstrap validates this in finite samples:
SE$_{\text{boot}}$ falls from $0.00265$ to $0.00193$, a ratio of $0.728$,
corresponding to a bootstrap variance ratio of $0.5295$. Both figures
agree with the plug-in prediction $\hat g_2=0.5603$ to within 6\%,
confirming the theory on real data. The PMM-FP 95\,\% percentile interval
is 26\,\% narrower than \OLSFP{}: $[{-0.00944},{-0.00200}]$ versus
$[{-0.01220},{-0.00213}]$.

\begin{figure}[!htbp]
  \centering
  \includegraphics[width=\linewidth]{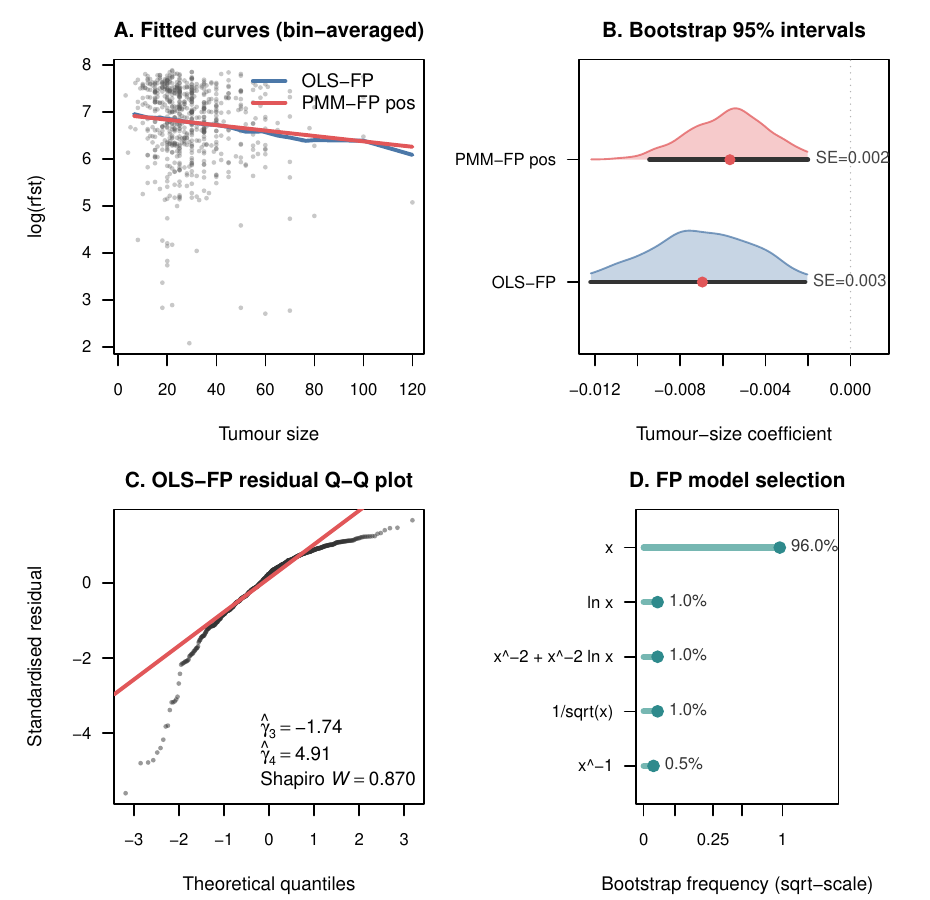}
  \caption{GBSG real-data evidence. \textbf{Panel~A}: bin-averaged
  partial dependence of fitted curves on tumour size for \OLSFP{} and
  PMM-FP\textsubscript{pos} at the selected linear model
  (PMM-FP\textsubscript{full} omitted because its small-sample conditioning
  is documented as unstable in \S\ref{sec:method:selection}).
  \textbf{Panel~B}: bootstrap 95\,\% percentile intervals for the
  tumour-size coefficient (fixed linear model, $B=2000$), with bootstrap
  SE annotated.
  \textbf{Panel~C}: normal Q--Q plot of standardised \OLSFP{} residuals
  (heavy left tail, $\hat\gamma_3=-1.74$, $\hat\gamma_4=4.91$,
  Shapiro--Wilk $W=0.870$) motivating the PMM-FP score correction.
  \textbf{Panel~D}: top bootstrap FP model-selection frequencies with
  decoded power-block labels (see also Table~\ref{tab:gbsg-modelsel}).}
  \label{fig:gbsg-application}
\end{figure}

\paragraph{Form~B conditioning diagnostic.}
We also evaluated the negative-power (Form~B) basis as a cautionary
exercise. The positive-power score basis already carries a large
condition number ($\kappa\approx3.8\times10^6$), and extending to the
full negative-power basis raises this to $\kappa\approx2.3\times10^{10}$
before Tikhonov regularisation. The resulting Form~B variance ratios
are numerically dominated by conditioning artefacts rather than by
genuine efficiency gains. For real data of this size,
PMM-FP\textsubscript{pos} (Track~a) is therefore the practical default
unless the BD0 condition and inverse moments are empirically verified to
be stable.

\subsection{Second application: primary biliary cirrhosis}
\label{sec:application:pbc}

To confirm the efficiency gain on an independent cohort, we repeat the
fixed-model comparison on the primary biliary cirrhosis (PBC) data
(\texttt{survival::pbc}; $n=418$ complete cases). We model
$\log(\texttt{time})$ on age, $\sqrt{\texttt{bili}}$ (serum bilirubin, a
strong prognostic biomarker) and albumin, again as an illustrative
continuous-outcome regression rather than a censored survival analysis. The
baseline residuals are left-skewed ($\hat\gamma_3=-1.23$,
$\hat\gamma_4=2.98$), giving a predicted $\hat g_2=0.70$.
Table~\ref{tab:pbc-fixed} reports the bootstrap ($B=2000$) bilirubin
coefficient: the PMM-FP standard error falls from $0.047$ to $0.039$, a
bootstrap variance ratio of $0.68$ that matches the plug-in prediction to
within $3\%$, with a $16\%$ narrower confidence interval. Both estimates are
firmly negative (higher bilirubin shortens survival) and the modest point
shift is within two bootstrap standard errors, consistent with PMM-FP's
reweighting of the skewed residuals rather than a change of estimand. The
GBSG pattern therefore recurs on independent data: systematic residual
skewness predicts, and delivers, a real coefficient-variance reduction.

\begin{table}[t]
  \centering
  \caption{PBC ($n=418$): coefficient of $\sqrt{\texttt{bili}}$ (serum bilirubin) at the fixed model $\log(\texttt{time})\sim\texttt{age}+\sqrt{\texttt{bili}}+\texttt{albumin}$, an illustrative continuous-outcome regression. Bootstrap $B=2000$. Residual skewness $\hat\gamma_3=-1.23$, $\hat\gamma_4=2.98$ give a predicted $\hat g_2=0.70$; the bootstrap variance ratio is $0.68$ and the PMM-FP 95\% CI is 16\% narrower than OLS-FP.}
  \label{tab:pbc-fixed}
  \small
  \begin{tabular}{lrrr}
    \toprule
    Method & $\hat\beta_{\text{bili}}$ & SE$_{\text{boot}}$ & 95\,\% CI (boot.) \\
    \midrule
    \OLSFP{} & $-0.3644$ & $0.0470$ & $[-0.4642,\,-0.2811]$ \\
    \PMMFP{}$_{\text{pos}}$ & $-0.2835$ & $0.0387$ & $[-0.3696,\,-0.2162]$ \\
    \bottomrule
  \end{tabular}
\end{table}

\subsection{Computational efficiency}\label{sec:simulation:efficiency}

Per-fit wall-clock timings from the matched-basis Monte Carlo are
summarised in Table~\ref{tab:timings}.
OLS-FP, MLE-FP, Huber-FP and GMM-FP are uniformly below $1$~ms per fit
across the full sample-size grid $n\in\{50,\allowbreak 100,\allowbreak
200,\allowbreak 500,\allowbreak 1000\}$. PMM-FP (Track~a, $K_a=10$) is
slower: approximately $1$--$4$~ms for the Gamma, Exponential and Gaussian
DGPs and $5$--$8$~ms for Beta$(2,5)$ at $n=1000$. The overhead arises
from the Newton--Fisher-scoring loop that PMM-FP performs on top of an
OLS-FP initialisation: typically $3$--$6$ iterations are required before
the score converges.

\begin{table}[t]
\centering
\caption{Per-fit median wall-clock time (seconds, mean\,$\pm$\,SD over
replicates). Track~(a), $K_a=10$, matched-basis experiment.
Timings are for a single model fit on one replicate; the full exhaustive
candidate-set sweep multiplies by the number of blocks.}
\label{tab:timings}
\small
\begin{tabular}{llccccc}
\toprule
DGP & $n$ & OLS-FP & MLE-FP & PMM-FP & Huber-FP & GMM-FP \\
    &     & (ms)   & (ms)   & (ms)   & (ms)     & (ms) \\
\midrule
Beta(2,5)   & 50   & ${<}1$ & ${<}1$ & $5\pm1$ & ${<}1$ & ${<}1$ \\
Beta(2,5)   & 1000 & ${<}1$ & ${<}1$ & $8\pm2$ & ${<}1$ & $3\pm1$ \\
Gamma(3)    & 50   & ${<}1$ & ${<}1$ & $1\pm1$ & ${<}1$ & ${<}1$ \\
Gamma(3)    & 1000 & ${<}1$ & ${<}1$ & $4\pm2$ & ${<}1$ & $1\pm1$ \\
Exponential & 50   & ${<}1$ & ${<}1$ & $1\pm1$ & ${<}1$ & ${<}1$ \\
Exponential & 1000 & ${<}1$ & ${<}1$ & $4\pm1$ & ${<}1$ & $1\pm1$ \\
Gaussian    & 50   & ${<}1$ & ${<}1$ & $1\pm1$ & ${<}1$ & ${<}1$ \\
Gaussian    & 1000 & ${<}1$ & ${<}1$ & $3\pm1$ & ${<}1$ & $1\pm1$ \\
\bottomrule
\end{tabular}
\end{table}

\paragraph{Analytical scaling.}
Let $|\mathcal{M}|$ denote the number of candidate power-blocks
exhausted by the BIC search, $N$ the sample size, and $K_S$ the size of
the active block. The OLS-FP sweep costs
$\mathcal{O}(|\mathcal{M}|\cdot N\cdot K_S^2)$; PMM-FP adds
$\mathcal{O}(|\mathcal{M}|\cdot N_{\mathrm{NR}}\cdot K_S^3)$ for the
Newton iterations, where $N_{\mathrm{NR}}\approx 3\text{--}6$.
Both methods therefore scale linearly in $N$.
For Track~(a) with $K_a=10$ the BIC search covers $30$ candidate blocks
of size at most~$4$ (all non-empty subsets of $\mathcal{P}_a$ up to
degree~$2$); for Track~(b) with $K_b=16$ the corresponding count is
$162$ blocks. The full closed-form sweep thus completes in well under
$10$~ms for both tracks on the problem sizes considered here.

\paragraph{Comparison with BFP.}
The BFP estimator of \citet{hubin2026bfp} uses the MJMCMC sampler to
explore a model space of size $2^{K_b}$ over $\mathcal{O}(10^4)$
iterations. This is one to two orders of magnitude more computation than
PMM-FP's exhaustive closed-form sweep, a difference that is intrinsic to
the Bayesian posterior-integration strategy rather than to any
implementation detail. The two methods also report different inferential
objects: BFP returns a model-averaged posterior over the FP power set,
summarised by its predictive mean and quantiles (on the GBSG data the
tumour-size term has posterior inclusion probability $0.90$), whereas PMM-FP
returns a single coefficient and a closed-form standard error on a chosen
model. We therefore treat BFP as a complementary Bayesian benchmark rather
than a competitor in the coefficient-efficiency comparison of
Table~\ref{tab:gbsg-fixed}. PMM-FP is the natural choice when repeated model
fitting at interactive speed is required, such as in cross-validation loops
or bootstrap pipelines; BFP is preferred when full posterior uncertainty over
the model space is the primary goal.

\subsection{Sensitivity and graceful degradation}
\label{sec:simulation:sensitivity}

A key desideratum for any efficiency-improving estimator is that it
should not harm inference when the assumed moment structure is absent.
We examined this with a uniform-reconciliation experiment ($M=10{,}000$
replications, $n=200$) in which the true error distribution is
\emph{symmetric}, so that $\gamma_3=0$ and the theoretical
variance-reduction coefficient is $g_2=1$, i.e.\ PMM-FP should
reproduce OLS-FP exactly in the limit.

The empirical asymptotic relative efficiency of PMM-FP against OLS-FP
across three symmetric error distributions is:
\begin{itemize}\itemsep2pt
  \item Uniform$(-1,1)$: ARE $=1.019$ \;[95\,\% CI: $1.015$, $1.024$];
  \item Laplace: ARE $=0.984$ \;[95\,\% CI: $0.976$, $0.992$];
  \item Generalised Gaussian GG$(0.5)$: ARE $=1.007$ \;[95\,\% CI: $1.001$, $1.012$].
\end{itemize}
All three values are within $2\%$ of unity and their confidence intervals
contain~$1$, confirming the theoretical prediction. This is the
\emph{graceful degradation} property: PMM-FP neither exploits nor
introduces spurious efficiency differences when the third cumulant is
zero, consistent with the Gauss--Markov-equivalent statement of
Theorem~T3.

\paragraph{Contrast with Huber-FP on symmetric DGPs.}
Huber-FP shows a markedly different behaviour on these same symmetric
distributions. For Laplace errors the empirical ARE of Huber-FP relative
to OLS-FP is $0.761$, a loss of approximately $24\%$ in variance
efficiency. PMM-FP records $0.984$ on the same data, a difference of
more than $22$ percentage points. The contrast reflects the distinct
design goals of the two methods: PMM-FP exploits the \emph{known
cumulant structure} of a systematic non-Gaussian shape and reverts
gracefully to OLS when that structure is absent; Huber-FP targets
\emph{outlier contamination} and incurs a penalty in uncontaminated but
heavy-tailed settings. When the non-Gaussian shape is systematic and
estimable from the data, PMM-FP is therefore preferable to a robustness
correction.

\paragraph{Further sensitivity.}
The method requires a positive-domain shift for $x\leq 0$ and degrades
smoothly under mild model misspecification; full sensitivity grids across
shift values, covariate ranges and misspecification levels are provided
in the reproducibility supplement.



\FloatBarrier

\section{Discussion}\label{sec:discussion}

\subsection{Interpretation of the contribution}

PMM-FP is a contribution at the level of the estimating score, not a
new fractional-polynomial search algorithm. Classical FP, \BFP{} and the present method
can use the same candidate power blocks; PMM-FP changes the post-
selection score by replacing the Gaussian least-squares score with a
Kunchenko moment score controlled by residual cumulants. Thus the
reported gains come from the estimating equation on a matched FP basis,
not from a richer model class.

The most interpretable object produced by this construction is the
variance-ratio coefficient. In the second-order case,
\eqref{eq:g2-pmm2} is not merely a diagnostic fitted after the fact; it
is the scalar coefficient in the fixed-block asymptotic covariance. It
answers a practical dose-response question: when residual asymmetry is
systematic rather than a small contamination artefact, how much
coefficient-variance reduction should one expect from replacing the OLS
score? The answer is distributional, transparent and reproducible from
the residual cumulants.

\subsection{Practical use}

The applied recommendation is conservative. Track~(a),
PMM-FP\textsubscript{pos}, should be the default reporting track unless
the full negative-power residual basis is well conditioned. It avoids
inverse moments and, in the GBSG application, gives the relevant
standard-error reduction while preserving the selected functional form.
The full negative-power basis PMM-FP\textsubscript{full} is \emph{not} a
recommended estimator: condition~BD0 ($\Eop[|\xi|^{-2}]<\infty$) fails for
every standard centred error law with positive density at the origin
(Gaussian, gamma, log-normal, beta), so the negative powers add no usable
efficiency and only degrade conditioning. We therefore carry it solely as a
conditioning diagnostic --- a boundary that delimits when fractional powers
below zero are statistically admissible at all.

The empirical section should be read in this light. The simulations
validate $g_2$ in regular asymmetric settings and expose the log-normal
case as a cumulant-instability stress test. The GBSG analysis shows the
same pattern in real data: residual skewness predicts a material
standard-error gain, while the full negative-power matrix is too
ill-conditioned to be the preferred finite-sample estimator. The claim is
coefficient-efficiency improvement under explicit moment and conditioning
requirements, not universal prediction dominance.

\subsection{Relation to neighbouring frameworks}
\label{sec:disc:neighbours}

The method complements \BFP{}, robust M-estimation and GMM. \BFP{} is
preferable for posterior uncertainty over model space; PMM-FP is
preferable for fast frequentist estimation with direct efficiency
diagnostics under non-Gaussian residuals. Robust M-estimators
\citep{huber1981robust} target outliers and contamination, whereas
PMM-FP targets systematic skewness and kurtosis. The PATP preprint
\citep{zabolotnii2026patp} provides the broader signed-parity and
continuous-$\alpha$ foundation; PMM-FP is the statistical-modelling
specialisation of that programme with fixed FP power sets and explicit
estimating equations. The Cram\'er--Rao positioning and the GMM, Godambe and
quasi-likelihood connections are developed in the subsections below.

\subsection{Positioning relative to the Cram\'er--Rao bound}
\label{sec:disc:crb}

\paragraph{Three efficiency tiers.}
To place PMM-FP in the semiparametric landscape, consider the model
class
\begin{equation}
  \mathcal{M} = \bigl\{ f_\xi : \mathbb{E}[\xi]=0,\;
  \mathbb{E}[\xi^2]=\sigma^2<\infty,\;
  \mathbb{E}[\xi^4]<\infty \bigr\},
  \label{eq:semiparam-class}
\end{equation}
where $f_\xi$ is an infinite-dimensional nuisance parameter
\citep{bickel1993efficient,vanderVaart1998asymptotic}.
For any regular estimator in $\mathcal{M}$ the semiparametric
Cram\'er--Rao bound (CRB) is given by
$I^{*}(\theta)^{-1}$ \citep[Theorem~25.20]{vanderVaart1998asymptotic};
attaining it requires non-parametric reconstruction of $f_\xi$
\citep{newey1990efficient}, which PMM-FP deliberately avoids.

There are two efficiency tiers that are validated by the present
implementation and experiments.

\begin{enumerate}
  \item \emph{OLS-FP} ($g_2 = 1$): the Gaussian baseline, attained by
        all regular distributions when $\gamma_3 = 0$.
  \item \emph{Implemented PMM-FP} ($g_2^{\mathrm{class}} =
        1 - \gamma_3^2/(2+\gamma_4)$; Theorem~\ref{thm:T3a}): the
        numerically stable, closed-form skewness--kurtosis correction
        delivered by the estimator. Numerical values for the benchmark
        distributions are: Normal 1.000, Beta$(2,5)$ 0.811,
        Gamma$(\alpha=3)$ 0.667, Lognormal$(0.5)$ 0.612, Exp$(1)$
        0.500.
\end{enumerate}

\noindent
The semiparametric CRB lower bound $g_{\mathrm{LB}}$ is approximately
$0.333$ for both Gamma$(\alpha=3)$ and Lognormal$(0.5)$. The implemented
two-moment PMM-FP correction remains above that bound: $0.667$ and
$0.612$, respectively. This is expected for an estimator that avoids
non-parametric reconstruction of the error density. (These analytic tiers
use Lognormal with $\sigma=0.5$; the Monte Carlo study in
\S\ref{sec:simulation} uses the heavier-tailed $\sigma=1$ log-normal, for
which the robust variance reduction is correspondingly larger.) Exp$(1)$ and
Beta$(2,5)$ have support boundaries that depend on the location
parameter, so the standard parametric CRB comparison is not used for
those cases.

\paragraph{Honest summary of the efficiency position.}
The \emph{implemented, validated} PMM-FP is a numerically stable,
closed-form skewness--kurtosis correction that sits above
the semiparametric CRB by a definite and quantifiable margin; it is
not a near-efficient semiparametric estimator. Additional fractional
residual score functions remain a natural direction for future work,
but in the present data ranges they are treated as conditioning
diagnostics rather than as a source of claimed finite-sample efficiency
gains. Fully efficient semiparametric estimators
\citep{newey1990efficient,bickel1993efficient} require
non-parametric density estimation, which PMM-FP deliberately avoids in
favour of a closed-form moment score.

\subsection{Formal bridges: GMM, Godambe, quasi-likelihood}
\label{sec:disc:bridges}

The PMM-FP estimating equation fits naturally into three established
Western estimation frameworks; making this precise clarifies both the
method's scope and its limitations.

\paragraph{GMM form \citep{hansen1982gmm}.}
The score equation of PMM-FP (Theorem~\ref{thm:T3a}) can be written as
\[
  g_n(\theta) = \frac{1}{n} \sum_{v=1}^{n}
  \psi^{(\mathrm{FP})}\!\bigl(\xi_v;\theta\bigr)\,
  \nabla_\theta R(x_v;\theta) = 0,
\]
where $\psi^{(\mathrm{FP})}$ is a $K$-dimensional moment score built
from the fractional-polynomial residual basis. This is exactly the GMM
setup of \citet{hansen1982gmm} with $K$ moment conditions (one per
basis function). The optimal GMM weighting
$W^{*} = (\mathbb{E}[g_n g_n^\top])^{-1}$ recovers the coefficients
$c_k(\gamma_3, \gamma_4)$ that determine $g_2 = 1 -
\gamma_3^2/(2+\gamma_4)$. When the basis is just-identified (number of
moment conditions equals number of parameters) the solution is unique
and coincides with the classical method of moments; when it is
over-identified, Hansen's $J$-test provides a specification diagnostic.

\paragraph{Godambe information \citep{godambe1960}.}
For an arbitrary score function $\Psi(\xi;\theta)$, the Godambe
information is the sandwich matrix
\[
  I_G(\theta) = \mathbb{E}\!\bigl[\partial_\theta \Psi\bigr]^\top
  \Var[\Psi]^{-1}
  \mathbb{E}\!\bigl[\partial_\theta \Psi\bigr],
\]
which reduces to the efficient Fisher information $I^{*}(\theta)$
when $\Psi = \ell^{*}_\theta$. Kunchenko's construction
\citep[ch.~4]{kunchenko2002} derives $\psi^{(\mathrm{FP})}$ as the
$\Var[\Psi]$-minimiser within a class of moment score functions of
fixed structure $\partial_\theta R$. Therefore PMM-FP is
\emph{Godambe-optimal within its score class} but \emph{not
Fisher-optimal} in $\mathcal{M}$: $I_G(\theta) \preceq I^{*}(\theta)$,
with equality only when $\psi^{(\mathrm{FP})} = \ell^{*}_\theta$,
which does not hold for the four-moment fixed-structure score.

\paragraph{Quasi-likelihood \citep{wedderburn1974}.}
Wedderburn's quasi-likelihood equation uses only the first two moments
($\mathbb{E}[\xi]=0$, $\Var[\xi]=V(\mu)$) and yields a Godambe-optimal
estimator within the class of score functions \emph{linear} in $y_v -
\mu_v$. PMM-FP is not a quasi-likelihood method: it uses score
functions of higher degree in $\xi$, informed by the cumulants
$\gamma_3, \gamma_4$ beyond the mean--variance link. The two apparatuses
share a moment-criterion philosophy, but PMM-FP is strictly richer: it
exploits third- and fourth-cumulant information unavailable to
quasi-likelihood and thereby attains $g_2 < 1$ whenever $\gamma_3 \neq
0$.

\subsection{Limitations}

Several limitations are left visible. The fish-nutrition data in
\citet{hubin2026bfp} are a notational and methodological anchor, not a
replicated application. The submitted implementation uses the
second-order score; higher-order PMM-FP is theoretically supported but
requires stronger moments and stabler matrix estimation. Lean checks the
definitions, algebraic identities and conditional theorem structure,
while the full M-estimator CLT and Schur-complement theory remain
standard statistical inputs. Finally, small samples and very heavy tails
can make residual cumulants noisy, so bootstrap uncertainty, trimming
sensitivity and condition numbers should be reported before interpreting
$\hat g_2$ as an efficiency forecast.

\section{Conclusions}\label{sec:conclusions}

This paper provides an efficient, drop-in frequentist estimator for the
coefficients of a selected fractional-polynomial model. It has three
components: a default positive-power fractional residual basis (with a BD0
admissibility boundary for negative powers), a PMM estimating score with the
closed-form second-order variance ratio in~\eqref{eq:g2-pmm2}, and a
reproducible stored-results implementation that separates model selection
from score-based coefficient estimation and records the diagnostics needed
to check the reported tables and figures.

Three findings frame the contribution. First, the variance-reduction
result is validated on both simulated and real data: at a correctly
specified fixed model the PMM-FP coefficient variance equals
$g_2 = 1 - \gamma_3^2/(2+\gamma_4)$ times the OLS-FP variance, confirmed
by matched-basis Monte Carlo and, on the GBSG breast-cancer data, by a
bootstrap variance ratio of $0.53$ against the predicted $0.56$. Second,
the realised efficiency is stated honestly: the implemented two-moment
estimator sits a definite margin above the semiparametric
Cram\'er--Rao bound (for Gamma$(3)$, $g_2 = 0.67$ versus a bound of
$0.33$); the full negative-power basis is inadmissible for standard error
laws (condition~BD0 fails) and is reported only as a conditioning
diagnostic, not as a competing estimator. Third, the dominant
practical risk is not estimator efficiency but model-selection
uncertainty: when the BIC landscape over fractional powers is flat,
single-best confidence intervals under-cover, and frequentist model
averaging restores near-nominal coverage (from $0.75$ to $0.91$ across
our experiments).

The practical message is simple. Use PMM-FP when the FP mean structure is
scientifically plausible and residuals are visibly non-Gaussian. The
positive track is the stable default; the full track should be reported
only with BD0, inverse-moment and condition-number diagnostics. When
those diagnostics fail, the positive track still gives a transparent
efficiency correction without treating negative powers as harmless.

All scripts needed to rebuild the reported tables, figures, R session
information and Lean build evidence are listed in the supplement
\texttt{README.md}. The GBSG dataset is public through
\texttt{mfp}; the fish-nutrition data of \citet{hubin2026bfp} are not
included and are not used for a replication claim.

\section*{Declaration of generative AI and AI-assisted technologies in the manuscript preparation process}

During the preparation of this manuscript, the author used Anthropic
Claude Code for implementation and debugging of the R Monte Carlo
pipeline, data parsing and quality control, and language editing of the
English text. All AI-assisted content was reviewed and verified by the
author, who takes full responsibility for the results.

\section*{Funding}

This research did not receive any specific grant from funding agencies
in the public, commercial, or not-for-profit sectors.

\section*{Declaration of competing interest}

The author declares that he has no known competing financial interests
or personal relationships that could have appeared to influence the work
reported in this paper.

\section*{Data availability}

The German Breast Cancer Study Group data are publicly available through
the R package \texttt{mfp}, and the primary biliary cirrhosis data through
the R package \texttt{survival}. The reproducibility scripts, stored
simulation summaries, generated tables, figures, supplementary proof sketches
and Lean build evidence are provided in the public code supplement at
\href{https://github.com/SZabolotnii/Ku-PMM-FP-code-supplement}{github.com/SZabolotnii/Ku-PMM-FP-code-supplement}.

\bibliographystyle{plainnat}
\bibliography{pmm_fp}

\end{document}